\documentclass[a4paper,twocolumn,11pt]{article}
\usepackage[a4paper,margin=1in]{geometry}
\pdfoutput=1

\usepackage[utf8]{inputenc}
\usepackage[english]{babel}
\usepackage[T1]{fontenc}

\usepackage{amsmath,amssymb,mathtools}
\usepackage{bm}
\usepackage{braket}

\usepackage{graphicx}
\usepackage{xcolor}
\usepackage{booktabs}
\usepackage[expansion=false]{microtype}

\usepackage{algpseudocode}

\usepackage{amsthm}

\usepackage{xurl}

\usepackage{hyperref}
\usepackage[nameinlink,noabbrev]{cleveref}

\hypersetup{
  colorlinks=true,
  linkcolor=blue!55!black,
  citecolor=blue!55!black,
  urlcolor=blue!55!black,
  pdftitle={A Gauge-Covariant Theoretical Framework for Non-Abelian Holonomy Estimation and Feed-Forward Correction in Time-Bin Photonic Qudits},
  pdfauthor={N. Josef Bruzzese}
}

\newcommand{\repositoryurl}{https://github.com/njblottly/A-Gauge-Covariant-Theoretical-Framework-for-Non-Abelian-Holonomy-Estimation-and-Feed-Forward-Correct}
\newcommand{\zenododoi}{10.5281/zenodo.19945689}
\newcommand{\zenodourl}{https://doi.org/10.5281/zenodo.19945689}

\graphicspath{
  {./}
  {results/}
  {results/figs/}
  {../results/}
  {../results/figs/}
}
\DeclareGraphicsExtensions{.pdf,.PDF,.png,.PNG,.jpg,.JPG}

\numberwithin{equation}{section}

\newcommand{\dd}{\mathrm{d}}
\newcommand{\ii}{\mathrm{i}}

\DeclareMathOperator{\Tr}{Tr}

\DeclareMathOperator{\diag}{diag}
\DeclareMathOperator{\spec}{spec}
\DeclareMathOperator{\polar}{polar}
\DeclareMathOperator*{\argmin}{arg\,min}

\DeclarePairedDelimiter{\norm}{\lVert}{\rVert}

\newcommand{\cH}{\mathcal{H}}
\newcommand{\cS}{\mathcal{S}}
\newcommand{\cP}{\mathcal{P}}

\newenvironment{pseudofloat}[1][t]{%
  \begin{figure*}[#1]
  \hrule\vspace{0.6ex}
}{%
  \vspace{0.6ex}\hrule
  \end{figure*}
}

\usepackage{aliascnt}

\theoremstyle{plain}
\newtheorem{theorem}{Theorem}[section]

\newaliascnt{proposition}{theorem}
\newtheorem{proposition}[proposition]{Proposition}
\aliascntresetthe{proposition}

\newaliascnt{lemma}{theorem}
\newtheorem{lemma}[lemma]{Lemma}
\aliascntresetthe{lemma}

\theoremstyle{definition}
\newaliascnt{definition}{theorem}

\aliascntresetthe{definition}

\theoremstyle{remark}
\newaliascnt{remark}{theorem}
\newtheorem{remark}[remark]{Remark}
\aliascntresetthe{remark}

\crefname{theorem}{Theorem}{Theorems}
\Crefname{theorem}{Theorem}{Theorems}
\crefname{proposition}{Proposition}{Propositions}
\Crefname{proposition}{Proposition}{Propositions}
\crefname{lemma}{Lemma}{Lemmas}
\Crefname{lemma}{Lemma}{Lemmas}
\crefname{definition}{Definition}{Definitions}
\Crefname{definition}{Definition}{Definitions}
\crefname{remark}{Remark}{Remarks}
\Crefname{remark}{Remark}{Remarks}
\renewenvironment{abstract}
  {\par\noindent\bfseries\ignorespaces}
  {\par\medskip}

\title{A Gauge-Covariant Theoretical Framework for Non-Abelian Holonomy Estimation and Feed-Forward Correction in Time-Bin Photonic Qudits}
\author{N. Josef Bruzzese}

\begin{document}
\maketitle

\begin{abstract}
\bfseries
We develop a theoretical and computational framework for estimating and correcting non-Abelian geometric distortions in time-bin photonic qudit processing when the relevant encoded object is a transported logical subspace rather than a collection of independent rays. In such settings---for example under mode mixing, multiplexed routing, or effective degeneracies---the geometric contribution is naturally matrix-valued and is described by a Wilczek--Zee holonomy on a rank-$m$ subbundle of the ambient Hilbert space. The framework generalises prior Abelian time-bin Pancharatnam--Berry feed-forward calibration, in which geometric distortions are represented by bin-resolved scalar phases, to the non-Abelian, matrix-valued case. We construct a gauge-covariant discrete estimator from overlap matrices between successive subspace frames: the polar factor of each overlap gives a unitary backward frame comparator, and the adjoint comparators compose to approximate the forward path-ordered exponential of the Wilczek--Zee connection. We prove gauge covariance under frame changes, polar optimality of the local comparator, consistency under partition refinement, and perturbative stability under well-conditioned overlap errors. We then formulate left- and right-acting feed-forward correction rules for removing the estimated holonomy from an effective logical operation. The work does not assume a device-specific transfer matrix, loss model, detector model, or experimental calibration pipeline; numerical studies use synthetic non-Abelian transport models to validate covariance, convergence, conditioning dependence, and correction fidelity.
\end{abstract}

\section{Introduction}

Geometric phases are a foundational feature of quantum theory. The Abelian phase associated with cyclic adiabatic evolution was identified by Berry, while related formulations due to Pancharatnam, Aharonov--Anandan, and Samuel--Bhandari clarified its interferometric, nonadiabatic, and kinematic structure~\cite{Pancharatnam1956,Berry1984,AharonovAnandan1987,SamuelBhandari1988}. The corresponding non-Abelian generalisation was introduced by Wilczek and Zee for degenerate eigenspaces and was later developed in nonadiabatic settings by Anandan~\cite{WilczekZee1984,Anandan1988}. Matrix-valued holonomies have since become central to holonomic quantum computation and decoherence-free geometric control~\cite{Sjoqvist2012,Xu2012PRL}.

Time-bin encodings support high-dimensional photonic quantum information processing but remain sensitive to interferometric drift, mode mixing, and path-dependent calibration errors. Recent work has emphasised both the robustness and the practical challenges of time-bin photonic architectures, including high-dimensional time-bin entanglement, time-bin protocol stability, scalable time-bin photonic qudit logic, and the characterisation of linear optical devices~\cite{LaingOBrien2012,Yu2025NatCommun,White2025PRL,Singh2025TimeBins,Delteil2024QuditLogic}. In many time-bin calibration settings, the relevant geometric contribution is treated as a scalar phase, or as a list of scalar phases associated with independently transported time-bin components. This is appropriate when the encoded objects can be modelled as independent rays, or when the relevant geometric action is diagonal or commuting in the chosen logical basis.

The present paper studies what changes when that raywise or diagonal assumption is no longer the appropriate effective description. In architectures involving mode mixing, multiplexed routing, coupled interferometric stages, or approximately degenerate encoded sectors, the transported object may be better modelled as an entire logical subspace rather than as a collection of independent one-dimensional rays. In that case the geometric effect is no longer a scalar phase attached to a single ray. It is a unitary matrix acting on the encoded subspace.

This distinction is operationally important. A scalar geometric phase can be removed by estimating one number, or one independent phase per time bin. A non-Abelian holonomy cannot generally be removed in this way, because its matrix representation depends on the chosen basis inside the transported subspace and because geometric contributions accumulated at different points of the path need not commute. The calibration problem is therefore not simply phase estimation; it is the reconstruction of a unitary operator defined only up to a change of logical frame.

This perspective is especially natural for high-dimensional photonic encodings. In an idealised time-bin model, each logical basis state may be treated as an independent temporal mode. In a more general interferometric architecture, however, routing, multiplexing, mode mixing, or effective degeneracies can cause the relevant logical sector to evolve as a subspace. The basis used to describe that subspace is then partly conventional. A useful calibration procedure must therefore respect this gauge freedom rather than depend on an arbitrary choice of basis.

The appropriate continuum description is a Wilczek--Zee holonomy. If a rank-$m$ logical subspace $\cS(\lambda)\subset\cH$ is transported along a closed path $\gamma$ in parameter space, the geometric contribution is
\begin{equation}
  U_\gamma
  =
  \cP
  \exp\!\left(
    -\int_\gamma A
  \right),
\end{equation}
where $A=\Phi^\dagger\dd\Phi$ is the non-Abelian Berry connection associated with a local orthonormal frame $\Phi$ for the subspace. Since the choice of frame is arbitrary, physically meaningful quantities must be gauge-covariant or gauge-invariant.

The conceptual shift can be summarised as follows. In the Abelian picture, calibration estimates scalar phase errors and removes them by phase subtraction. In the non-Abelian subspace picture, calibration must estimate a matrix-valued holonomy, track how that matrix transforms under changes of logical frame, and apply its inverse on the correct side of an effective logical operation. The purpose of this paper is to formulate that non-Abelian replacement in a way that is mathematically gauge-covariant, numerically reproducible, and explicitly limited to the theoretical setting considered here.

\subsection{Relation to prior Abelian time-bin calibration}

This work is a non-Abelian generalisation of the Abelian time-bin geometric-phase calibration framework developed in Ref.~\cite{WeeBruzzese2026PB}. In that earlier setting, the relevant geometric distortion is diagonal in the time-bin basis: each bin acquires a scalar Pancharatnam--Berry phase, and feed-forward correction amounts to estimating and subtracting a bin-resolved phase profile. That framework is rooted in the Abelian geometric-phase structure introduced in Refs.~\cite{Pancharatnam1956,Berry1984,AharonovAnandan1987,SamuelBhandari1988}. The present work addresses the more general situation in which the transported object is not a collection of independent rays but an $m$-dimensional logical subspace. In that case the geometric contribution is no longer a diagonal phase transformation but a matrix-valued Wilczek--Zee holonomy~\cite{WilczekZee1984,Anandan1988}. Thus the scalar phase correction
\begin{equation}
  \phi_j \mapsto \phi_j-\widehat{\phi}_j
\end{equation}
is replaced by the non-Abelian correction
\begin{equation}
  V_{\mathrm{eff}}\mapsto \widehat U_\gamma^\dagger V_{\mathrm{eff}},
\end{equation}
or by the corresponding right-acting version depending on the circuit convention. The earlier Abelian protocol is therefore recovered when the transported logical sector decomposes into independent one-dimensional rays or when all relevant holonomies commute and are simultaneously diagonal in the chosen time-bin basis.

\subsection{Scope and validity assumptions}

\begin{quote}
\textbf{Scope.}
This paper is theoretical and computational. It studies unitary holonomy reconstruction for transported rank-$m$ logical subspaces from adjacent-frame overlap data. The overlap data may come from an analytic model, a numerical transfer-matrix simulation, prior device characterisation, or synthetic validation data. The present manuscript does not perform device tomography, does not model detector noise, and does not claim an experimental implementation.
\end{quote}

Loss and leakage are included only insofar as they leave a well-defined, well-transmitted $m$-dimensional sector from which orthonormal frames can be extracted. The reconstruction is reliable only when adjacent overlap matrices remain well-conditioned. If the minimum singular value of any adjacent overlap approaches zero, the local unitary comparison problem becomes ill-conditioned and the reconstructed holonomy should be treated as unreliable even though the polar product itself remains unitary.

\subsection{Main contributions}

The paper develops a gauge-covariant discrete holonomy estimator based on polar decomposition of adjacent overlap matrices, together with a sign and ordering convention distinguishing backward frame comparators from forward coefficient transport. The discrete estimator is shown to be gauge-covariant, with spectral and Wilson-loop diagnostics that are gauge-invariant, and the local comparator is identified as the nearest unitary to the raw overlap matrix via a polar-optimality result. A continuum consistency theorem establishes convergence under partition refinement, while a perturbative stability statement quantifies how reconstruction error depends on overlap conditioning. Building on the reconstruction, a feed-forward correction principle is derived for removing the estimated non-Abelian holonomy from an effective logical operation. The framework is supported by reproducible Mathematica/Wolfram Language validation of gauge covariance, Abelian reduction, convergence, noncommutativity, conditioning dependence, and correction fidelity.

\subsection{Positioning relative to existing methods}

The mathematical ingredients used here are closely related to established ideas in geometric phase theory, lattice Berry-curvature calculations, and adiabatic projector transport. Kato's formulation of adiabatic transport and Simon's geometric interpretation provide a standard projector-based background for subspace holonomy~\cite{Kato1950,Simon1983}. Gauge-covariant link variables and Wilson-loop-type discretisations are also widely used in numerical Berry-phase and Chern-number computations; the Fukui--Hatsugai--Suzuki construction is a canonical example of a gauge-invariant discretised Berry-connection method on a lattice~\cite{FukuiHatsugaiSuzuki2005}. Polar decompositions and their perturbation properties are standard tools in numerical linear algebra~\cite{Higham1986}.

The contribution of this paper is therefore not the claim that discrete non-Abelian holonomy or polar decomposition is new in isolation. Rather, the contribution is to assemble these ingredients into a gauge-covariant calibration and feed-forward framework tailored to transported time-bin photonic logical subspaces. The emphasis is on how adjacent-frame overlap data can be converted into a unitary holonomy estimate, how the resulting estimate transforms under logical-frame changes, how conditioning controls reliability, and how the reconstructed non-Abelian geometric factor should be removed from an effective logical operation.

This positioning also clarifies the role of the numerical validation. The tests below do not attempt to establish a new numerical method for Berry curvature in general. They verify that the proposed calibration pipeline has the properties needed for a later platform-specific implementation: covariance under frame changes, correct path ordering, stable polar unitarisation of overlaps, and successful left- or right-acting feed-forward correction.

\section{Notation and standing assumptions}

\subsection{Ambient space, logical subspaces, and frames}

Let $\cH$ denote the ambient Hilbert space, including the time-bin modes and any auxiliary or leakage modes used in an analytic model. The encoded logical system is assumed to occupy an $m$-dimensional subspace
\begin{equation}
  \cS(\lambda)\subset \cH,
  \qquad
  \dim \cS(\lambda)=m,
\end{equation}
depending on external control parameters $\lambda$. The orthogonal projector onto this subspace is denoted by
\begin{equation}
  P(\lambda):\cH\to\cS(\lambda).
\end{equation}

A local orthonormal frame for $\cS(\lambda)$ is represented by a matrix
\begin{equation}
  \Phi(\lambda)
  =
  \begin{pmatrix}
  | & & | \\
  u_1(\lambda) & \cdots & u_m(\lambda) \\
  | & & |
  \end{pmatrix},
\end{equation}
satisfying $\Phi(\lambda)^\dagger\Phi(\lambda)=I_m$, whose columns span $\cS(\lambda)$. The corresponding projector is
\begin{equation}
  P(\lambda)=\Phi(\lambda)\Phi(\lambda)^\dagger .
\end{equation}

The frame is not unique. Any smooth right multiplication
\begin{equation}
  \Phi(\lambda)\mapsto \Phi'(\lambda)=\Phi(\lambda)G(\lambda),
\end{equation}
with $G(\lambda)\in U(m)$, represents the same physical subspace. This is the gauge freedom of the encoded logical subspace.

This gauge freedom is analogous to choosing coordinates on a vector space. The subspace itself is physical, but the ordered list of basis vectors used to represent it is not. Consequently, matrix representatives of connections, holonomies, and gates can change under a frame transformation even when the underlying physical object has not changed. Throughout the paper, statements about matrices should therefore be read as statements in a chosen frame, while spectra and traces of closed-loop holonomies provide frame-independent diagnostics.

\subsection{Wilczek--Zee connection}

The Wilczek--Zee connection one-form associated with a chosen frame is
\begin{equation}
  A(\lambda):=\Phi(\lambda)^\dagger \dd\Phi(\lambda)
  \in \mathfrak u(m).
\end{equation}
Because $\Phi^\dagger\Phi=I_m$, differentiating gives
\begin{equation}
  A^\dagger=-A,
\end{equation}
so the connection is anti-Hermitian.

For a parameterised path $t\mapsto\lambda(t)$, $t\in[0,T]$, we write
\begin{equation}
  \Phi(t):=\Phi(\lambda(t)),
\end{equation}
and
\begin{equation}
  A_t(t):=\Phi(t)^\dagger\dot\Phi(t).
\end{equation}
The continuum forward holonomy is
\begin{equation}
  U_\gamma
  =
  \cP
  \exp\!\left(
    -\int_0^T A_t(t)\,\dd t
  \right),
  \label{eq:continuum_holonomy}
\end{equation}
where $\cP$ denotes path ordering.

\subsection{Discrete frames, overlaps, and endpoint identification}

Let
\begin{equation}
  0=t_0<t_1<\cdots<t_N=T
\end{equation}
be a discretisation of the path. Define
\begin{equation}
  \Phi_k:=\Phi(t_k),
  \qquad
  P_k:=P(t_k).
\end{equation}
The adjacent-frame overlap matrix is
\begin{equation}
  M_k:=\Phi_k^\dagger\Phi_{k+1},
  \qquad
  k=0,\ldots,N-1.
  \label{eq:overlap_matrix}
\end{equation}
This matrix compares the frame at step $k+1$ with the frame at step $k$.

For a physical projector loop one requires $P_N=P_0$, but the endpoint frame need not literally satisfy $\Phi_N=\Phi_0$. In general, one may have
\begin{equation}
  \Phi_N=\Phi_0 B,
  \qquad
  B\in U(m),
  \label{eq:endpoint_identification}
\end{equation}
where $B$ is the endpoint identification between the final frame and the chosen base frame. With the column-vector convention used below, the discrete product $\widehat U_\gamma$ maps initial coefficients in the frame $\Phi_0$ to final coefficients in the frame $\Phi_N$. Therefore the representative of the closed-loop holonomy in the base frame $\Phi_0$ is
\begin{equation}
  \widehat U_{\gamma}^{(0)}
  =
  B\,\widehat U_\gamma .
  \label{eq:base_frame_endpoint_holonomy}
\end{equation}
If the sampled frame loop is literally closed, then $B=I_m$ and $\widehat U_{\gamma}^{(0)}=\widehat U_\gamma$. If not, the endpoint identification in \cref{eq:base_frame_endpoint_holonomy} must be included before quoting eigenphases or Wilson-loop traces. This convention is necessary because such quantities are invariants of a closed transport represented in a single base frame.

A key convention used throughout the paper is the following. The polar unitary extracted from $M_k$ is a \emph{backward frame comparator}, while its adjoint is the corresponding \emph{forward coefficient transport}. Explicitly,
\begin{equation}
  W_k
  :=
  \polar(M_k)
  =
  M_k(M_k^\dagger M_k)^{-1/2},
  \label{eq:backward_comparator}
\end{equation}
and
\begin{equation}
  T_k:=W_k^\dagger .
  \label{eq:forward_transport}
\end{equation}
With column coefficient vectors, the forward discrete holonomy is ordered as
\begin{equation}
  \widehat U_\gamma
  :=
  T_{N-1}T_{N-2}\cdots T_1T_0.
  \label{eq:discrete_forward_holonomy}
\end{equation}
This convention is chosen so that
\begin{equation}
  \widehat U_\gamma
  \longrightarrow
  \cP\exp\!\left(
    -\int_0^T A_t(t)\,\dd t
  \right)
\end{equation}
under partition refinement.

\subsection{Standing assumptions}

Unless otherwise stated, we assume that each $\Phi_k$ has orthonormal columns and that each overlap matrix $M_k=\Phi_k^\dagger\Phi_{k+1}$ is invertible. We further assume that the overlap matrices are uniformly well-conditioned:
\begin{equation}
  \mu_{\min}
  :=
  \min_{0\le k\le N-1}\sigma_{\min}(M_k)
  \ge
  \mu>0 .
  \label{eq:standing_mu_min}
\end{equation}
For continuum convergence statements, the frame $\Phi(t)$ is assumed to be at least $C^2$ along the path, and closed-loop spectral diagnostics are evaluated after a fixed base-frame endpoint identification.

The invertibility and conditioning assumptions are part of the validity domain of the method, not merely technical conveniences. If $\mu_{\min}$ is small, adjacent frames have poor mutual overlap in at least one logical direction, the polar factor is ill-conditioned, and the reconstructed holonomy should not be interpreted as reliable without further refinement or additional physical information.

\section{Subspace transport and non-Abelian holonomy}

This section records the continuum geometric structure underlying the discrete estimator. The key point is that a transported logical subspace carries a natural gauge freedom: the physical subspace is specified by the projector $P=\Phi\Phi^\dagger$, while the choice of frame $\Phi$ inside the subspace is arbitrary up to a right action by $U(m)$.

The continuum structure used here is the standard Wilczek--Zee geometry of a transported degenerate subspace~\cite{WilczekZee1984}. The essential difference from the Abelian Berry phase~\cite{Berry1984} is that the connection takes values in the noncommuting Lie algebra $\mathfrak u(m)$ rather than in $\mathfrak u(1)$. This is why the holonomy requires path ordering and why its matrix representative is gauge-covariant rather than gauge-invariant. Related nonadiabatic generalisations and holonomic-control applications appear in Refs.~\cite{Anandan1988,Sjoqvist2012,Xu2012PRL}.

\subsection{Gauge transformation of the Wilczek--Zee connection}

\begin{lemma}[Gauge transformation law]
\label{lem:gauge_connection}
Under a frame change
\begin{equation}
  \Phi(\lambda)\mapsto \Phi'(\lambda)=\Phi(\lambda)G(\lambda),
  \qquad
  G(\lambda)\in U(m),
\end{equation}
the Wilczek--Zee connection
\begin{equation}
  A=\Phi^\dagger \dd\Phi
\end{equation}
transforms as
\begin{equation}
  A\mapsto A'
  =
  G^\dagger A G+G^\dagger\dd G.
  \label{eq:gauge_transformation_connection}
\end{equation}
\end{lemma}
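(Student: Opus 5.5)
The plan is a direct computation from the definition $A=\Phi^\dagger\dd\Phi$, using only the product (Leibniz) rule for the exterior derivative of matrix-valued functions and the unitarity $G^\dagger=G^{-1}$. First, write the transformed connection as $A'=\Phi'^\dagger\dd\Phi'$ with $\Phi'=\Phi G$. Then $\Phi'^\dagger=G^\dagger\Phi^\dagger$, since the adjoint reverses the order of the product.

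Next, expand the differential of the product with the Leibniz rule, keeping matrix order fixed:
\begin{equation}
  \dd\Phi'=\dd(\Phi G)=(\dd\Phi)\,G+\Phi\,\dd G .
\end{equation}
Substituting gives
\begin{equation}
  A'=G^\dagger\Phi^\dagger(\dd\Phi)\,G+G^\dagger\Phi^\dagger\Phi\,\dd G .
\end{equation}
The first term is $G^\dagger A G$ by the definition of $A$. In the second term, the orthonormality $\Phi^\dagger\Phi=I_m$ from the standing assumptions collapses the product to $G^\dagger\dd G$. This yields \cref{eq:gauge_transformation_connection}.

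As a consistency check, I would verify that $A'$ is still anti-Hermitian, so that it lies in $\mathfrak u(m)$. Differentiating $G^\dagger G=I_m$ gives $(\dd G^\dagger)G+G^\dagger\dd G=0$, so $(G^\dagger\dd G)^\dagger=-G^\dagger\dd G$. The term $G^\dagger AG$ is anti-Hermitian because $A$ is. This check is not needed for the identity itself, but it confirms that the transformation law preserves the Lie-algebra valuedness noted after the definition of $A$.

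There is no substantial obstacle. The only points needing care are bookkeeping ones. One must keep noncommuting factors in their correct order, writing $(\dd\Phi)G$ rather than $G\,\dd\Phi$. One must also interpret $\dd$ componentwise on matrix-valued functions of $\lambda$, which requires $G$ to be at least differentiable. That is implicit in the smooth gauge transformations introduced in the notation section.
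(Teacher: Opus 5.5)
Your proposal is correct and matches the paper's proof: write $A'=(\Phi G)^\dagger\dd(\Phi G)$, expand by the Leibniz rule with the factor order preserved, recognise $G^\dagger A G$, and use $\Phi^\dagger\Phi=I_m$ to reduce the second term to $G^\dagger\dd G$. Your anti-Hermiticity check is not in the paper but is correct, and it confirms that $A'$ still takes values in $\mathfrak u(m)$.
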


\begin{proof}
Using $\Phi'=\Phi G$,
\begin{align}
  A'
  &=
  (\Phi G)^\dagger\dd(\Phi G) \\
  &=
  G^\dagger\Phi^\dagger(\dd\Phi)G
  +
  G^\dagger\Phi^\dagger\Phi\,\dd G \\
  &=
  G^\dagger A G+G^\dagger\dd G,
\end{align}
since $\Phi^\dagger\Phi=I_m$.
\end{proof}

The inhomogeneous term $G^\dagger\dd G$ shows that the connection itself is not gauge-invariant. This is expected: $A$ depends on the arbitrary frame choice. The physical information is instead encoded in the conjugacy class of the closed-loop holonomy.

\subsection{Holonomy and conjugacy invariants}

For a path $t\mapsto\lambda(t)$, $t\in[0,T]$, the forward coefficient transport associated with the connection is
\begin{equation}
  \dot U(t)=-A_t(t)U(t),
  \qquad
  U(0)=I_m.
  \label{eq:continuum_transport_ode}
\end{equation}
The solution is the path-ordered exponential in \cref{eq:continuum_holonomy}. For a closed loop $\gamma$, we write this endpoint transport as $U_\gamma$.

Under a gauge transformation $\Phi(t)\mapsto\Phi(t)G(t)$, the solution of \cref{eq:continuum_transport_ode} transforms as
\begin{equation}
  U(t)\mapsto U'(t)=G(t)^\dagger U(t)G(0).
  \label{eq:continuum_transport_gauge_transform}
\end{equation}
For a closed loop with $G(T)=G(0)$, and after using the same base-frame endpoint identification before and after the transformation, the holonomy transforms by conjugation:
\begin{equation}
  U_\gamma\mapsto U_\gamma'
  =
  G(0)^\dagger U_\gamma G(0).
  \label{eq:continuum_holonomy_conjugation}
\end{equation}
Therefore the spectrum of $U_\gamma$ and the Wilson-loop-type traces $\Tr(U_\gamma^r)$, for $r=1,2,\ldots$, are gauge-invariant.

\subsection{Projector formulation}

Although the frame-based connection is convenient for computations, the underlying geometry can also be expressed in terms of the projector path $P(t)$. The projector is gauge-invariant:
\begin{equation}
  P(t)=\Phi(t)\Phi(t)^\dagger.
\end{equation}
Differentiating $P^2=P$ gives
\begin{equation}
  P\dot P P=0,
\end{equation}
which expresses the fact that $\dot P$ only couples the subspace to its orthogonal complement at first order.

Parallel transport of a state vector $|\psi(t)\rangle\in\operatorname{Ran}P(t)$ may be characterised by
\begin{equation}
  P(t)|\psi(t)\rangle=|\psi(t)\rangle,
  \qquad
  P(t)|\dot\psi(t)\rangle=0.
  \label{eq:projector_parallel_transport}
\end{equation}
Writing $|\psi(t)\rangle=\Phi(t)c(t)$ gives
\begin{align}
  0
  &=
  P|\dot\psi\rangle \\
  &=
  \Phi\Phi^\dagger(\dot\Phi c+\Phi\dot c) \\
  &=
  \Phi(A_t c+\dot c).
\end{align}
Thus
\begin{equation}
  \dot c(t)=-A_t(t)c(t),
\end{equation}
which is the coefficient transport equation used above.

\subsection{Abelian reduction}

When $m=1$, the frame is a single normalised vector $|u(\lambda)\rangle$ defined up to a phase,
\begin{equation}
  |u(\lambda)\rangle\mapsto |u(\lambda)\rangle e^{\ii\chi(\lambda)}.
\end{equation}
The connection becomes the scalar one-form
\begin{equation}
  A=\braket{u|\dd u}.
\end{equation}
Under the phase change,
\begin{equation}
  A\mapsto A+\ii\,\dd\chi.
\end{equation}
For a closed loop, the holonomy is
\begin{equation}
  U_\gamma
  =
  \exp\!\left(
    -\oint_\gamma \braket{u|\dd u}
  \right).
\end{equation}
This is the usual Abelian geometric phase. Hence the non-Abelian framework reduces to standard Pancharatnam--Berry phase calibration when the transported logical subspace has rank one.

\section{Discrete holonomy estimation from overlap matrices}

We now construct the discrete holonomy estimator from adjacent-frame overlap matrices. The construction has three ingredients: the overlap matrix $M_k=\Phi_k^\dagger\Phi_{k+1}$, its polar unitary factor $W_k=\polar(M_k)$, and the forward transport $T_k=W_k^\dagger$. The adjoint in the final step is essential: $W_k$ compares the next frame to the current frame, whereas $T_k$ transports coefficient vectors forward.

The use of overlap matrices is motivated by the following elementary fact. If two adjacent frames span the same subspace and differ only by a unitary change of basis, then the overlap matrix between them is unitary. For genuinely moving subspaces, however, the adjacent frames do not span exactly the same subspace, and the overlap matrix is generally not unitary. Its singular values measure how well the two neighbouring subspaces overlap in each logical direction. The polar decomposition separates this information into a unitary part, which gives the best local basis comparator, and a positive part, which encodes the loss of perfect overlap between the neighbouring subspaces.

Thus the polar step should not be interpreted as an arbitrary numerical projection onto the unitary group. It is the canonical way to extract the unitary comparison contained in the raw overlap data. The positive polar factor is not discarded because it is unimportant; rather, it is converted into a conditioning diagnostic. When that positive factor has very small singular values, the local comparison problem itself becomes unstable.

\subsection{Local polar comparator}

For adjacent frames $\Phi_k$ and $\Phi_{k+1}$, define
\begin{equation}
  M_k:=\Phi_k^\dagger\Phi_{k+1}.
\end{equation}
If $M_k$ is nonsingular, its polar decomposition is
\begin{equation}
  M_k=W_kH_k,
\end{equation}
where $W_k\in U(m)$ and $H_k=(M_k^\dagger M_k)^{1/2}>0$. Equivalently,
\begin{equation}
  W_k
  =
  \polar(M_k)
  =
  M_k(M_k^\dagger M_k)^{-1/2}.
  \label{eq:polar_backward_comparator}
\end{equation}
We call $W_k$ the polar backward comparator. The corresponding forward coefficient transport is
\begin{equation}
  T_k:=W_k^\dagger.
  \label{eq:polar_forward_transport}
\end{equation}
The discrete forward holonomy is then
\begin{equation}
  \widehat U_\gamma
  :=
  T_{N-1}T_{N-2}\cdots T_1T_0.
  \label{eq:discrete_holonomy_estimator}
\end{equation}

\begin{remark}[Ordering convention]
The product in \cref{eq:discrete_holonomy_estimator} acts on column coefficient vectors. If $c_k$ is the coefficient vector in the frame $\Phi_k$, then $c_{k+1}=T_kc_k$, and hence after $N$ steps,
\begin{equation}
  c_N=
  T_{N-1}\cdots T_1T_0c_0.
\end{equation}
\end{remark}

\subsection{Polar optimality}

The polar comparator is not an arbitrary unitarisation of $M_k$. It is the nearest unitary matrix to $M_k$ in Frobenius norm.

This optimality property is useful conceptually because it explains why the polar unitary is the natural local comparator. Among all possible unitary matrices that could be assigned to a nonunitary overlap matrix, the polar factor is the one that changes the overlap least in Frobenius norm. In this sense, the algorithm extracts the minimally distorted unitary transport compatible with the adjacent-frame data.

\begin{proposition}[Polar optimality]
\label{prop:polar_optimality}
Let $M\in\mathbb C^{m\times m}$ be nonsingular and let
\begin{equation}
  W=\polar(M)=M(M^\dagger M)^{-1/2}.
\end{equation}
Then $W$ is the unique nearest unitary matrix to $M$ in Frobenius norm:
\begin{equation}
  W
  =
  \argmin_{Q\in U(m)}
  \norm{M-Q}_F .
\end{equation}
\end{proposition}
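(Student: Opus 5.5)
The plan is to reduce the minimisation to a linear trace maximisation and solve it with the polar decomposition $M=WH$, where $H=(M^\dagger M)^{1/2}>0$. For any $Q\in U(m)$, expanding the Frobenius norm gives
\begin{equation}
  \norm{M-Q}_F^2
  =
  \norm{M}_F^2+\norm{Q}_F^2-2\,\mathrm{Re}\Tr(Q^\dagger M)
  =
  \norm{M}_F^2+m-2\,\mathrm{Re}\Tr(Q^\dagger M).
\end{equation}
The first two terms do not depend on $Q$. Minimising $\norm{M-Q}_F$ over $U(m)$ is therefore equivalent to maximising $\mathrm{Re}\Tr(Q^\dagger M)$ over $U(m)$.

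Next we substitute $M=WH$ and set $V:=Q^\dagger W\in U(m)$. As $Q$ ranges over $U(m)$, so does $V$, and $Q=W$ exactly when $V=I_m$. Diagonalise $H=\sum_j s_j\, e_j e_j^\dagger$, where the $s_j=\sigma_j(M)>0$ are the singular values and $\{e_j\}$ is orthonormal. Then
\begin{equation}
  \mathrm{Re}\Tr(VH)=\sum_j s_j\,\mathrm{Re}\,(e_j^\dagger V e_j)\le \sum_j s_j=\Tr H,
\end{equation}
because $|e_j^\dagger V e_j|\le \norm{Ve_j}=1$. Equality holds at $V=I_m$, so $Q=W$ attains the maximum. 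This gives the minimal value $\norm{M}_F^2+m-2\Tr H=\sum_j(s_j-1)^2$.

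The main step needing care is uniqueness, and this is where nonsingularity of $M$ enters. Equality in the bound forces $\mathrm{Re}\,(e_j^\dagger V e_j)=1$ for every $j$, since each $s_j>0$. This requires $e_j^\dagger V e_j=1$ together with $\norm{Ve_j}=1$. Equality in Cauchy--Schwarz then gives $Ve_j=e_j$ for all $j$. Because $\{e_j\}$ is a basis, $V=I_m$ and hence $Q=W$. If some $s_j$ were zero, the corresponding direction would be unconstrained and uniqueness would fail. This explains why the statement assumes $M$ nonsingular and connects to the conditioning assumption \cref{eq:standing_mu_min}.

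Finally, we note that $W=M(M^\dagger M)^{-1/2}$ is indeed unitary: $W^\dagger W=H^{-1}M^\dagger M H^{-1}=I_m$. This confirms that the minimiser found above coincides with the comparator defined in \cref{eq:polar_backward_comparator}.
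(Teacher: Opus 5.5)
Your proof is correct and follows essentially the paper's argument: both reduce the problem to maximising $\Re\Tr(Q^\dagger M)$ and bound it by $\sum_j\sigma_j$ via the diagonal entries of a unitary in the right singular basis of $M$. Indeed, your $V=Q^\dagger W$ satisfies $R^\dagger V R=Y^\dagger$ for the paper's $Y=L^\dagger Q R$ with $M=L\Sigma R^\dagger$, and your Cauchy--Schwarz treatment of the equality case spells out the uniqueness step that the paper only asserts.
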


\begin{proof}
Let
\begin{equation}
  M=L\Sigma R^\dagger
\end{equation}
be a singular value decomposition, with $\Sigma=\diag(\sigma_1,\ldots,\sigma_m)$ positive because $M$ is nonsingular. The polar unitary is
\begin{equation}
  W=LR^\dagger.
\end{equation}
For any $Q\in U(m)$,
\begin{align}
  \norm{M-Q}_F^2
  &=
  \Tr[(M-Q)^\dagger(M-Q)] \\
  &=
  \Tr(M^\dagger M)+m
  -2\Re\Tr(Q^\dagger M).
\end{align}
Thus minimising $\norm{M-Q}_F^2$ is equivalent to maximising $\Re\Tr(Q^\dagger M)$. Define
\begin{equation}
  Y:=L^\dagger Q R.
\end{equation}
Since $Q,L,R$ are unitary, $Y\in U(m)$. Then
\begin{align}
  \Re\Tr(Q^\dagger M)
  &=
  \Re\Tr(Y^\dagger\Sigma) \\
  &=
  \sum_{a=1}^m \sigma_a \Re(Y_{aa}) \\
  &\le
  \sum_{a=1}^m \sigma_a,
\end{align}
with equality only when $Y=I_m$. Hence the unique maximiser is $Q=LR^\dagger=W$.
\end{proof}

\subsection{Gauge covariance of the discrete estimator}

Gauge covariance is the central consistency requirement for the discrete estimator. Since the frames $\Phi_k$ are arbitrary bases for the same physical subspaces, the reconstructed holonomy matrix cannot be expected to be identical after a gauge transformation. Equality of matrix entries would be the wrong criterion. The correct requirement is covariance: a change of base frame should conjugate the closed-loop holonomy. This ensures that basis-independent quantities, such as eigenvalues and traces of powers, are unchanged.

\begin{proposition}[Gauge covariance of the polar estimator]
\label{prop:gauge_covariance}
Let the discrete frames transform as
\begin{equation}
  \Phi_k\mapsto \Phi_k'=\Phi_kG_k,
  \qquad
  G_k\in U(m).
\end{equation}
Then the overlap matrices transform as
\begin{equation}
  M_k\mapsto M_k'
  =
  G_k^\dagger M_kG_{k+1}.
\end{equation}
The polar backward comparators and forward transports transform as
\begin{align}
  W_k &\mapsto W_k'=G_k^\dagger W_kG_{k+1}, \\
  T_k &\mapsto T_k'=G_{k+1}^\dagger T_kG_k.
\end{align}
For a closed loop with $G_N=G_0$, and after using the same endpoint base-frame identification, the discrete holonomy transforms by conjugation:
\begin{equation}
  \widehat U_\gamma
  \mapsto
  \widehat U_\gamma'
  =
  G_0^\dagger\widehat U_\gamma G_0.
\end{equation}
Consequently, $\spec(\widehat U_\gamma)$ and $\Tr(\widehat U_\gamma^r)$ for $r=1,2,\ldots$ are gauge-invariant.
\end{proposition}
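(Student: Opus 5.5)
The plan is to verify the transformation laws step by step, from overlaps to comparators to transports to the product, and then read off the invariants.

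\textbf{Overlaps.} Substituting $\Phi_k'=\Phi_kG_k$ into \cref{eq:overlap_matrix} gives
\begin{equation}
  M_k'=G_k^\dagger\Phi_k^\dagger\Phi_{k+1}G_{k+1}=G_k^\dagger M_kG_{k+1}.
\end{equation}
Since $G_k$ and $G_{k+1}$ are unitary, $M_k'$ is nonsingular whenever $M_k$ is, with the same singular values. The standing conditioning assumption is therefore itself gauge-invariant.

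\textbf{Comparators.} Here I see two routes, and I would give both. The direct computation starts from
\begin{equation}
  M_k'^\dagger M_k'=G_{k+1}^\dagger M_k^\dagger M_kG_{k+1}.
\end{equation}
Because $X\mapsto X^{-1/2}$ on positive definite matrices commutes with unitary conjugation (by the spectral theorem), we get $(M_k'^\dagger M_k')^{-1/2}=G_{k+1}^\dagger(M_k^\dagger M_k)^{-1/2}G_{k+1}$. Hence
\begin{equation}
  W_k'=G_k^\dagger M_kG_{k+1}G_{k+1}^\dagger(M_k^\dagger M_k)^{-1/2}G_{k+1}=G_k^\dagger W_kG_{k+1}.
\end{equation}
The alternative route uses \cref{prop:polar_optimality}. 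The Frobenius norm is invariant under left and right unitary multiplication, and $Q\mapsto G_k^\dagger QG_{k+1}$ is a bijection of $U(m)$. So the unique nearest unitary to $M_k'$ is the image of the unique nearest unitary to $M_k$. The transport law then follows by taking adjoints: $T_k'=W_k'^\dagger=G_{k+1}^\dagger T_kG_k$.

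\textbf{Product.} In the ordered product \cref{eq:discrete_holonomy_estimator}, the inner factors $G_kG_k^\dagger$ cancel telescopically, leaving
\begin{equation}
  \widehat U_\gamma'=G_N^\dagger\widehat U_\gamma G_0.
\end{equation}
This is the discrete analogue of \cref{eq:continuum_transport_gauge_transform}.

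\textbf{Endpoint identification.} This is the only delicate step, and I expect it to be the main obstacle. The bookkeeping has to match the convention of \cref{eq:base_frame_endpoint_holonomy}. If $\Phi_N=\Phi_0B$ and $G_N=G_0$, then
\begin{equation}
  \Phi_N'=\Phi_NG_0=\Phi_0'\,G_0^\dagger BG_0,
\end{equation}
so $B'=G_0^\dagger BG_0$. The base-frame holonomy then transforms as
\begin{equation}
  B'\widehat U_\gamma'=G_0^\dagger BG_0\,G_0^\dagger\widehat U_\gamma G_0=G_0^\dagger(B\widehat U_\gamma)G_0,
\end{equation}
which is a conjugation. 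In the literally closed case $B=I_m$, this reduces to the stated $\widehat U_\gamma'=G_0^\dagger\widehat U_\gamma G_0$. I would state explicitly that "using the same endpoint base-frame identification" means that $B$ is transformed consistently in this way, or equivalently that the sampled loop is closed.

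\textbf{Invariants.} Conjugation preserves the characteristic polynomial, so $\spec(\widehat U_\gamma)$ is unchanged. It also gives $(G_0^\dagger UG_0)^r=G_0^\dagger U^rG_0$, and by cyclicity of the trace $\Tr(\widehat U_\gamma'^r)=\Tr(\widehat U_\gamma^r)$ for every $r\ge 1$.
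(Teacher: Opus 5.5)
Your proof is correct and follows the paper's argument step for step: overlap law, polar equivariance, adjoints, telescoping to $G_N^\dagger\widehat U_\gamma G_0$, then conjugation invariants. The only differences are that you obtain $W_k'=G_k^\dagger W_kG_{k+1}$ from the explicit formula $M_k(M_k^\dagger M_k)^{-1/2}$ (or from polar optimality) where the paper uses uniqueness of the polar decomposition, and that you track the endpoint matrix $B$ explicitly; since $G_N=G_0$ already gives $\widehat U_\gamma'=G_0^\dagger\widehat U_\gamma G_0$ for any $B$, that bookkeeping shows the base-frame representative $B\widehat U_\gamma$ is also conjugated rather than being needed for the stated formula.
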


\begin{proof}
The overlap transformation follows immediately:
\begin{align}
  M_k'
  &=
  (\Phi_kG_k)^\dagger(\Phi_{k+1}G_{k+1}) \\
  &=
  G_k^\dagger M_kG_{k+1}.
\end{align}
The polar decomposition is equivariant under left and right multiplication by unitary matrices. Indeed, if $M=W H$ is the polar decomposition of $M$, then
\begin{equation}
  G_k^\dagger M G_{k+1}
  =
  (G_k^\dagger W G_{k+1})
  (G_{k+1}^\dagger H G_{k+1}),
\end{equation}
where $G_k^\dagger W G_{k+1}$ is unitary and $G_{k+1}^\dagger H G_{k+1}$ is positive Hermitian. Therefore
\begin{equation}
  W_k'=G_k^\dagger W_kG_{k+1}.
\end{equation}
Taking adjoints gives
\begin{equation}
  T_k'=G_{k+1}^\dagger T_kG_k.
\end{equation}
Using the ordered product convention,
\begin{align}
  \widehat U_\gamma'
  &=
  T_{N-1}'T_{N-2}'\cdots T_0' \\
  &=
  G_N^\dagger
  (T_{N-1}T_{N-2}\cdots T_0)
  G_0.
\end{align}
For a closed loop with the same endpoint identification, $G_N=G_0$, so
\begin{equation}
  \widehat U_\gamma'
  =
  G_0^\dagger\widehat U_\gamma G_0.
\end{equation}
Spectra and traces of powers are invariant under conjugation.
\end{proof}

\subsection{Consistency under partition refinement}

\begin{theorem}[Consistency of forward polar transport]
\label{thm:consistency}
Let $\Phi(t)$ be a $C^2$ orthonormal frame along a path $t\in[0,T]$. Let
\begin{equation}
  0=t_0<t_1<\cdots<t_N=T
\end{equation}
be a partition with mesh size $h:=\max_k(t_{k+1}-t_k)$. Define
\begin{align}
  M_k &= \Phi(t_k)^\dagger\Phi(t_{k+1}), \\
  W_k &= \polar(M_k), \\
  T_k &= W_k^\dagger,
\end{align}
and $\widehat U_\gamma=T_{N-1}\cdots T_0$. Assume that all $M_k$ are nonsingular for sufficiently fine partitions. Then
\begin{equation}
  \widehat U_\gamma
  \longrightarrow
  \cP\exp\!\left(
    -\int_0^T A_t(t)\,\dd t
  \right),
\end{equation}
with $A_t(t)=\Phi(t)^\dagger\dot\Phi(t)$, as $h\to0$. For the first-order product rule above, the global error is $O(h)$.
\end{theorem}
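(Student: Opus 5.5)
The argument is a local-truncation-plus-stability argument, the standard Lady Windermere's fan for one-step methods on $U(m)$. Write $h_k:=t_{k+1}-t_k$ and $\Phi_k:=\Phi(t_k)$. Let $U(t,s)$ be the exact propagator of \cref{eq:continuum_transport_ode}, i.e.\ $\partial_t U(t,s)=-A_t(t)U(t,s)$ with $U(s,s)=I_m$. The target is then $U(T,0)=U(t_N,t_{N-1})\cdots U(t_1,t_0)$. The proof has two steps: show each $T_k$ approximates $U(t_{k+1},t_k)$ to $O(h_k^2)$, then telescope.

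\emph{Local expansion.} Since $\Phi$ is $C^2$, Taylor's theorem gives
\begin{equation}
  \Phi_{k+1}=\Phi_k+h_k\dot\Phi_k+R_k, \qquad \norm{R_k}\le \tfrac12 h_k^2\sup\norm{\ddot\Phi}.
\end{equation}
Hence $M_k=I_m+h_kA_k+O(h_k^2)$, with $A_k:=A_t(t_k)$ anti-Hermitian. Then $M_k^\dagger M_k=I_m+h_k(A_k+A_k^\dagger)+O(h_k^2)=I_m+O(h_k^2)$, so $(M_k^\dagger M_k)^{-1/2}=I_m+O(h_k^2)$. For $h$ small, $M_k$ is close to $I_m$, which gives nonsingularity and a uniform bound on the inverse square root. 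It follows that $W_k=I_m+h_kA_k+O(h_k^2)$ and $T_k=W_k^\dagger=I_m-h_kA_k+O(h_k^2)$.

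On the exact side, $U(t_{k+1},t_k)=I_m-h_kA_k+O(h_k^2)$ because $A_t$ is $C^1$. Therefore
\begin{equation}
  \norm{T_k-U(t_{k+1},t_k)}\le C h_k^2,
\end{equation}
with $C$ depending only on $\sup\norm{\dot\Phi}$ and $\sup\norm{\ddot\Phi}$. The main care point is making the $O(h_k^2)$ constants uniform in $k$. I would handle this with compactness of $[0,T]$ and the bound $\norm{(I+E)^{-1/2}-I}\le c\norm{E}$ for $\norm{E}\le 1/2$, which comes from a power series or functional calculus.

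\emph{Global accumulation.} Both $T_j$ and $U(t_{j+1},t_j)$ are unitary, so I would use the telescoping identity
\begin{equation}
  \widehat U_\gamma-U(T,0)=\sum_{k=0}^{N-1} T_{N-1}\cdots T_{k+1}\bigl(T_k-U(t_{k+1},t_k)\bigr)U(t_k,0).
\end{equation}
All outer factors have operator norm one, so
\begin{equation}
  \norm{\widehat U_\gamma-U(T,0)}\le \sum_k C h_k^2\le C h\sum_k h_k = CTh.
\end{equation}
This gives convergence with global error $O(h)$. Unitarity of the polar factors is exactly what removes the usual Grönwall growth factor. For a closed loop, the endpoint identification $B$ multiplies both sides equally and does not affect the estimate.
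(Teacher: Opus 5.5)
Your proposal is correct and follows the paper's proof. Both use the same Taylor expansion giving $M_k=I_m+h_kA_k+O(h_k^2)$, the same observation $M_k^\dagger M_k=I_m+O(h_k^2)$ yielding $T_k=I_m-h_kA_k+O(h_k^2)$, and the same accumulation of $O(h_k^2)$ local errors into an $O(h)$ global error. Your explicit telescoping against the exact propagators $U(t_{k+1},t_k)$, with unitary outer factors and $\sum_k h_k^2\le hT$, makes rigorous the step the paper only summarises as ``summing over $O(h^{-1})$ steps'', and it handles non-uniform meshes cleanly.
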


\begin{proof}
Let $\Delta t_k=t_{k+1}-t_k$. Since $\Phi(t)$ is $C^2$,
\begin{equation}
  \Phi(t_{k+1})
  =
  \Phi(t_k)+\dot\Phi(t_k)\Delta t_k+O(\Delta t_k^2).
\end{equation}
Therefore, with $A_k:=A_t(t_k)$,
\begin{equation}
  M_k
  =
  I_m+A_k\Delta t_k+O(\Delta t_k^2).
\end{equation}
Because $\Phi^\dagger\Phi=I_m$, differentiating gives $A_k^\dagger=-A_k$. Hence
\begin{equation}
  M_k^\dagger M_k=I_m+O(\Delta t_k^2),
\end{equation}
and
\begin{equation}
  (M_k^\dagger M_k)^{-1/2}=I_m+O(\Delta t_k^2).
\end{equation}
Thus
\begin{equation}
  W_k=I_m+A_k\Delta t_k+O(\Delta t_k^2),
\end{equation}
and the forward step is
\begin{equation}
  T_k=W_k^\dagger
  =
  I_m-A_k\Delta t_k+O(\Delta t_k^2).
\end{equation}
The product $T_{N-1}\cdots T_0$ is therefore the standard first-order product approximation to the path-ordered exponential generated by $-A_t(t)$. The local error is $O(\Delta t_k^2)$, and summing over $O(h^{-1})$ steps gives a global $O(h)$ error.
\end{proof}

The theorem gives a conservative continuum guarantee. It assumes only smooth frame data and uses the first-order information contained in adjacent overlaps. In special synthetic benchmarks, higher apparent convergence order may be observed when midpoint exponentials or additional smoothness are used. Such behaviour is useful as an implementation check, but it should not be confused with the general theorem: the theorem states the robust condition under which the estimator approaches the Wilczek--Zee holonomy for arbitrary sufficiently smooth frame paths.

\subsection{Abelian discrete limit}

For $m=1$, the overlap matrix is a scalar,
\begin{equation}
  M_k=\braket{u_k|u_{k+1}}.
\end{equation}
The polar backward comparator is
\begin{equation}
  W_k=\frac{M_k}{|M_k|},
\end{equation}
and the forward step is
\begin{equation}
  T_k=W_k^\dagger=\frac{M_k^*}{|M_k|}.
\end{equation}
Therefore
\begin{equation}
  \widehat U_\gamma
  =
  \prod_{k=N-1}^{0}
  \frac{\braket{u_k|u_{k+1}}^*}
       {|\braket{u_k|u_{k+1}}|}.
\end{equation}
This is the discrete Abelian geometric phase convention corresponding to
\begin{equation}
  \exp\!\left(
    -\oint_\gamma \braket{u|\dd u}
  \right).
\end{equation}

\section{Minimal two-level non-Abelian example}

Before connecting the formalism to time-bin photonics, it is useful to record a minimal rank-two example that displays the essential non-Abelian features. Let the transported logical subspace have $m=2$, and suppose that in a chosen frame the connection along a parameter path is
\begin{equation}
  A_t(t)
  =
  \ii\left[
    a(t)\sigma_x+b(t)\sigma_z
  \right],
  \label{eq:minimal_pauli_connection}
\end{equation}
where $\sigma_x$ and $\sigma_z$ are Pauli matrices and $a(t),b(t)$ are real functions. The connection is anti-Hermitian, as required. If the ratio $a(t)/b(t)$ varies along the path, then generally
\begin{equation}
  [A_t(t),A_t(t')]
  =
  -2\ii\left[a(t)b(t')-b(t)a(t')\right]\sigma_y
  \ne 0 .
\end{equation}
The holonomy is therefore not determined by the ordinary exponential of the integrated connection; it requires path ordering:
\begin{equation}
  U_\gamma
  =
  \cP\exp\!\left[-\int_0^T A_t(t)\,\dd t\right].
\end{equation}

This example also illustrates why feed-forward correction has a left/right convention. If an intended logical gate $V$ does not commute with $U_\gamma$, then
\begin{equation}
  U_\gamma V \ne VU_\gamma .
\end{equation}
Thus a measured or modelled effective operation must first be assigned a convention,
\begin{equation}
  V_{\mathrm{eff}}\approx U_\gamma V
  \qquad\text{or}\qquad
  V_{\mathrm{eff}}\approx VU_\gamma ,
\end{equation}
before the inverse holonomy can be applied on the correct side. In the Abelian limit, where all relevant factors commute or reduce to scalar phases, this distinction disappears.

\section{A concrete time-bin photonic interpretation}

The formalism above is platform-independent, but it can be connected to a concrete time-bin photonic setting. A useful point of contact is the architecture proposed by Delteil, in which $N$ qubits are encoded in a single photon occupying a superposition of $2^N$ time bins and manipulated with a number of linear-optical elements scaling linearly in $N$~\cite{Delteil2024QuditLogic}. Consider more generally a device whose accessible modes consist of $d$ temporal bins, possibly augmented by ancillary routing or leakage modes. For each value of a control parameter $\lambda$---for example a vector of interferometer phases, switch settings, pulse-shaper phases, or electro-optic modulation amplitudes---the calibrated single-photon transformation may be represented by a transfer matrix
\begin{equation}
  \mathcal T(\lambda):\mathbb C^d\to\mathbb C^d.
\end{equation}
Such transfer information can in principle be obtained from linear-optical device characterisation or tomography~\cite{LaingOBrien2012}. The present work does not implement such a characterisation experimentally; it assumes that the required model or overlap data have already been supplied.

An $m$-dimensional logical time-bin code may be represented by a reference isometry
\begin{equation}
  \Phi_{\mathrm{in}}:\mathbb C^m\to\mathbb C^d.
\end{equation}
After propagation through the device, an effective transported logical subspace can be obtained by orthonormalising the columns of
\begin{equation}
  \mathcal T(\lambda)\Phi_{\mathrm{in}},
\end{equation}
or, more generally, by extracting an $m$-dimensional singular-vector subspace associated with the well-transmitted logical sector. This yields a frame
\begin{equation}
  \Phi(\lambda)
\end{equation}
for the transported logical subspace. Sampling a closed control path $\lambda_0,\lambda_1,\ldots,\lambda_N$ then gives adjacent overlaps
\begin{equation}
  M_k=\Phi(\lambda_k)^\dagger\Phi(\lambda_{k+1}),
\end{equation}
which are precisely the inputs to the polar holonomy estimator.

In this interpretation, the feed-forward correction is not necessarily a physical operation applied at every infinitesimal step. It may instead be compiled as a final logical correction acting on the encoded $m$-dimensional subspace. Depending on the hardware, this correction could be implemented by a programmable interferometric mesh, a pulse-shaping stage, or a calibrated logical-basis transformation. The present paper does not claim such an implementation. Its purpose is to specify the gauge-covariant mathematical object that such a correction would have to implement and to identify the conditioning diagnostics that determine whether the reconstruction is reliable.

This example also clarifies why the overlap formulation is useful. The algorithm does not require direct access to a continuum connection $A_t$. It requires only adjacent frame-comparison data, which can be generated by a model, simulation, or future device characterisation. This is the sense in which the proposed method is a calibration framework rather than an experimental demonstration.

\section{Stability and conditioning}

The preceding construction assumes exact overlap matrices. In practice, even in a theoretical numerical benchmark, one may have access only to approximate overlaps
\begin{equation}
  \widetilde M_k=M_k+E_k,
\end{equation}
where $E_k$ represents numerical error, model error, or characterisation error. The relevant question is how perturbations in $M_k$ propagate to perturbations in the polar unitary and, ultimately, to the reconstructed holonomy.

The key conditioning parameter is the smallest singular value of the overlap matrix,
\begin{equation}
  \sigma_{\min}(M_k).
\end{equation}
If $\sigma_{\min}(M_k)$ is bounded away from zero, the local polar comparator is stable. If $\sigma_{\min}(M_k)$ is small, then the local frame-comparison problem becomes ill-conditioned.

The conditioning issue has a simple geometric interpretation. If $\sigma_{\min}(M_k)$ is close to one, then every logical direction in $\cS(t_{k+1})$ has a strong projection onto $\cS(t_k)$, so the local comparison between frames is reliable. If $\sigma_{\min}(M_k)$ is close to zero, then at least one logical direction at the next step is almost orthogonal to the previous logical subspace. In that situation, there is not enough local overlap information to identify a stable unitary comparator. The instability is therefore not an artefact of the polar decomposition; it reflects a genuine loss of local distinguishability in the subspace-matching problem.

This is why the singular-value diagnostic is part of the algorithm rather than a secondary numerical detail. A holonomy estimate can be exactly unitary and nevertheless unreliable if it is assembled from poorly conditioned local overlaps. Reporting $\mu_{\min}$ alongside the holonomy spectrum makes this distinction explicit.

\subsection{Local polar stability}

\begin{proposition}[Local polar stability]
\label{prop:polar_stability}
Let $M\in\mathbb C^{m\times m}$ be nonsingular with $\sigma_{\min}(M)\ge \mu>0$, and let $\widetilde M=M+E$ with $\norm{E}_2\le \epsilon<\mu/2$. Then $\widetilde M$ is nonsingular and there exists a constant $C$ such that
\begin{equation}
  \norm{\polar(\widetilde M)-\polar(M)}_2
  \le
  C\frac{\epsilon}{\mu}.
  \label{eq:polar_stability_bound}
\end{equation}
The constant is not claimed to be sharp; it depends on the chosen norm convention and on the compact neighbourhood of nonsingular matrices on which the polar map is evaluated, but not on the particular perturbation $E$ within that neighbourhood.
\end{proposition}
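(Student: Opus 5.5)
Nonsingularity of $\widetilde M$ comes first, from Weyl's inequality for singular values: $\sigma_{\min}(\widetilde M)\ge\sigma_{\min}(M)-\norm{E}_2\ge\mu-\epsilon>\mu/2>0$. So both polar factors are well defined, and every matrix on the segment $M+sE$, $s\in[0,1]$, has smallest singular value at least $\mu/2$. This segment is the compact neighbourhood mentioned in the statement.

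The difference of polar factors I would bound by the classical argument via the Hermitian dilation, or equivalently via a Sylvester equation. Write $W=\polar(M)$, $\widetilde W=\polar(\widetilde M)$, $H=(M^\dagger M)^{1/2}$, $\widetilde H=(\widetilde M^\dagger\widetilde M)^{1/2}$. From $M=WH$ and $\widetilde M=\widetilde W\widetilde H$ one obtains
\begin{equation}
  W^\dagger\widetilde M-\widetilde M^\dagger W
  =
  W^\dagger\widetilde W\widetilde H-\widetilde H\widetilde W^\dagger W .
\end{equation}
Combining this with the identity $W^\dagger M-M^\dagger W=H-H=0$ gives $W^\dagger E-E^\dagger W$ on the left-hand side. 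Now set $Z:=W^\dagger\widetilde W-I$, or work with the skew-Hermitian part of $W^\dagger\widetilde W$. This produces a Sylvester-type equation in which the unknown $X=W^\dagger\widetilde W$ enters through $X\widetilde H+\widetilde H^{\,}\!\cdot$ terms whose coefficients have spectrum bounded below by $\sigma_{\min}(\widetilde H)\ge\mu-\epsilon$. The resulting bound is the known estimate (Li / Mathias)
\begin{equation}
  \norm{\widetilde W-W}
  \le
  \frac{2\norm{E}}{\sigma_{\min}(M)+\sigma_{\min}(\widetilde M)}
\end{equation}
in the Frobenius norm, with a factor at most $2$ or so in the spectral norm. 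Using $\sigma_{\min}(M)+\sigma_{\min}(\widetilde M)\ge 2\mu-\epsilon\ge\tfrac32\mu$ then yields the claimed $C\epsilon/\mu$ with an explicit $C$ of order $2$. Since the statement explicitly does not claim sharpness, I may quote this standard result with a reference to Ref.~\cite{Higham1986} and its perturbation literature, and sketch only the derivation.

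For a fully self-contained alternative, I would differentiate the polar map. Along $M(s)=M+sE$, the identity $W(s)^\dagger M(s)=H(s)$ Hermitian implies that $\Omega:=W^\dagger\dot W$ is skew-Hermitian and satisfies
\begin{equation}
  \Omega H+H\Omega=W^\dagger E-E^\dagger W .
\end{equation}
Solving this Lyapunov equation in the eigenbasis of $H$ gives $|\Omega_{ij}|\le|(\cdot)_{ij}|/(h_i+h_j)$ with $h_i\ge\mu/2$. Hence $\norm{\dot W}_F\le 2\norm{E}_F/\mu$, and integrating over $s\in[0,1]$ gives the bound.

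The main obstacle is the norm: the entrywise Lyapunov/Schur-multiplier bound is clean in Frobenius norm. In the spectral norm the Schur multiplier $1/(h_i+h_j)$ has norm at most $\max 1/(h_i+h_j)$, because it is a positive-definite Cauchy-type kernel. Alternatively one can pay a factor $\sqrt m$. I would settle for the constant $C$ this produces, consistent with the statement's disclaimer.
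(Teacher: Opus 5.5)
Your proof is correct, and your ``fully self-contained alternative'' is the paper's own argument. The paper also uses Weyl's inequality, then a bound proportional to $\mu^{-1}$ on the Fréchet derivative of the polar map along the segment $M+sE$, then the mean-value theorem. The difference is that the paper only asserts the derivative bound with a citation to Higham. You derive it, from the Lyapunov equation $\Omega H+H\Omega=W^\dagger E-E^\dagger W$.

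That derivation buys two things. First, an explicit constant: $\norm{\dot W}_2\le\norm{E}_2/\sigma_{\min}(M+sE)$ through the Cauchy-kernel Schur multiplier, so $C=2$, or $C=2\sqrt m$ if you go through the Frobenius norm. Second, it shows that $C$ depends only on the lower singular-value bound. The dependence on a ``compact neighbourhood'' written into the statement is therefore not actually needed.

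Your headline route through Li's bound is genuinely different. It is a direct finite-difference argument and needs neither the segment nor smoothness of the polar map, only nonsingularity of $M$ and $\widetilde M$. It gives the sharper constant $C=4/3$.

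There is one slip in your sketch of that route. Pairing $W^\dagger E$ with $E^\dagger W$ gives $X\widetilde H-\widetilde H X^\dagger$ with $X=W^\dagger\widetilde W$. This is not linear in $Z=X-I$, because unitarity gives $Z^\dagger=-Z-Z^\dagger Z$, so it leaves a quadratic term $\widetilde H Z^\dagger Z$ and no Sylvester equation. Taking the skew-Hermitian part does not remove the mixing either.

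The linear equation comes from the asymmetric pairing. Since $W^\dagger E=X\widetilde H-H$ and $E^\dagger\widetilde W=\widetilde H-HX$, one gets $W^\dagger E-E^\dagger\widetilde W=Z\widetilde H+HZ$, with $\norm{Z}=\norm{\widetilde W-W}$ in any unitarily invariant norm. Solving this in the eigenbases of $H$ and $\widetilde H$ gives Li's estimate $\norm{\widetilde W-W}\le 2\norm{E}/(\sigma_{\min}(M)+\sigma_{\min}(\widetilde M))$.

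The representation $1/(h_i+\tilde h_j)=\int_0^\infty e^{-th_i}e^{-t\tilde h_j}\,\dd t$ shows that this bound holds with the same constant in the spectral norm. So the hedge ``a factor at most $2$ or so'' can be dropped.
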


\begin{proof}
By Weyl's singular value inequality,
\begin{equation}
  \sigma_{\min}(\widetilde M)
  \ge
  \sigma_{\min}(M)-\norm{E}_2
  >
  \frac{\mu}{2},
\end{equation}
so $\widetilde M$ is nonsingular. The polar map
\begin{equation}
  M\mapsto \polar(M)=M(M^\dagger M)^{-1/2}
\end{equation}
is smooth on the open set of nonsingular matrices. Standard perturbation results for the matrix polar decomposition imply local Lipschitz continuity on subsets with singular values bounded away from zero~\cite{Higham1986}. On the subset satisfying $\sigma_{\min}(M)\ge \mu/2$, its Fréchet derivative is bounded by a constant proportional to $\mu^{-1}$. Applying the mean-value theorem for matrix functions along the segment $M+sE$, $s\in[0,1]$, gives the stated estimate.
\end{proof}

The bound should be read as a local perturbative statement. For fixed overlap error $\epsilon$, the polar factor becomes more sensitive as the lower singular value bound $\mu$ decreases. Equivalently, the relevant small parameter is not just the absolute perturbation size but the dimensionless ratio $\epsilon/\mu$. This ratio is the quantity tested in the numerical conditioning study below.

\subsection{Accumulated holonomy perturbation}

\begin{proposition}[Accumulated holonomy stability]
\label{prop:accumulated_stability}
Let
\begin{equation}
  W_k=\polar(M_k),
  \qquad
  \widetilde W_k=\polar(\widetilde M_k),
\end{equation}
with $\norm{\widetilde W_k-W_k}_2\le \delta_k$. Let $T_k=W_k^\dagger$ and $\widetilde T_k=\widetilde W_k^\dagger$, and define
\begin{align}
  \widehat U_\gamma &= T_{N-1}\cdots T_0, \\
  \widetilde U_\gamma &= \widetilde T_{N-1}\cdots \widetilde T_0.
\end{align}
Then
\begin{equation}
  \norm{\widetilde U_\gamma-\widehat U_\gamma}_2
  \le
  \sum_{k=0}^{N-1}\delta_k.
  \label{eq:accumulated_stability_bound}
\end{equation}
In particular, if $\norm{\widetilde M_k-M_k}_2\le \epsilon_k$ and $\sigma_{\min}(M_k)\ge\mu>0$ for all $k$, then
\begin{equation}
  \norm{\widetilde U_\gamma-\widehat U_\gamma}_2
  \le
  C\sum_{k=0}^{N-1}\frac{\epsilon_k}{\mu}.
  \label{eq:accumulated_overlap_stability_bound}
\end{equation}
\end{proposition}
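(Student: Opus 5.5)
The plan is to prove \cref{eq:accumulated_stability_bound} with a telescoping identity for products of unitaries. Then \cref{eq:accumulated_overlap_stability_bound} follows by inserting the local bound of \cref{prop:polar_stability}.

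\emph{Telescoping identity.} For $j=0,\ldots,N$ define the hybrid products
\begin{equation}
  X_j := \widetilde T_{N-1}\cdots\widetilde T_{j}\,T_{j-1}\cdots T_0 ,
\end{equation}
so that $X_0=\widetilde U_\gamma$ and $X_N=\widehat U_\gamma$. Consecutive hybrids differ in exactly one factor, which gives
\begin{equation}
  \widetilde U_\gamma-\widehat U_\gamma
  =\sum_{k=0}^{N-1}(X_k-X_{k+1})
  =\sum_{k=0}^{N-1}\widetilde T_{N-1}\cdots\widetilde T_{k+1}\,(\widetilde T_k-T_k)\,T_{k-1}\cdots T_0 .
\end{equation}

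\emph{Norm estimate.} All $T_j$ and $\widetilde T_j$ are unitary, being adjoints of polar factors. The spectral norm is unitarily invariant, so each summand has norm exactly $\norm{\widetilde T_k-T_k}_2$. Taking adjoints preserves the norm, so $\norm{\widetilde T_k-T_k}_2=\norm{(\widetilde W_k-W_k)^\dagger}_2=\norm{\widetilde W_k-W_k}_2\le\delta_k$. The triangle inequality then gives \cref{eq:accumulated_stability_bound}.

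\emph{Second bound.} To obtain \cref{eq:accumulated_overlap_stability_bound}, apply \cref{prop:polar_stability} to each $M_k$ with $\epsilon=\epsilon_k$, which gives $\delta_k\le C\epsilon_k/\mu$. Summing over $k$ yields the claim.

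\emph{Main obstacle.} The delicate point is the hypothesis of \cref{prop:polar_stability}. That proposition requires $\epsilon_k<\mu/2$, so this smallness must be added as an assumption, or else be implicit in ``well-conditioned overlap errors''. The constant $C$ must also be uniform in $k$. This holds because every $M_k$ and $\widetilde M_k$ lies in the common set $\{\sigma_{\min}\ge\mu/2\}$, and all these matrices have $\norm{\cdot}_2\le 1+\epsilon_k$ because the $\Phi_k$ are isometries. A single constant from the polar Fréchet-derivative bound therefore serves for every step. Alternatively, one can use the explicit bound $\norm{\polar(\widetilde M)-\polar(M)}_2\le 2\norm{E}_2/(\sigma_{\min}(M)+\sigma_{\min}(\widetilde M))$ from the polar perturbation literature~\cite{Higham1986}. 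This gives $C=2$ directly under $\epsilon_k<\mu/2$, since the denominator exceeds $\mu$.
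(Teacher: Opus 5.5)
Your proof is correct and is essentially the paper's own argument: the same telescoping identity (your hybrid products $X_j$ just make its derivation explicit), unitary invariance of $\norm{\cdot}_2$ together with $\norm{\widetilde T_k-T_k}_2=\norm{\widetilde W_k-W_k}_2$, and then \cref{prop:polar_stability} applied at each step. Your remark that the second bound tacitly needs $\epsilon_k<\mu/2$ (left implicit in the paper's statement) is a fair sharpening, and the explicit bound $2\norm{E}_2/(\sigma_{\min}(M)+\sigma_{\min}(\widetilde M))$ you invoke is valid and does give a uniform constant $C=2$, though this non-asymptotic form is usually credited to R.-C.~Li rather than to the 1986 reference.
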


\begin{proof}
Use the telescoping identity
\begin{multline}
  \widetilde T_{N-1}\cdots\widetilde T_0
  -
  T_{N-1}\cdots T_0
  \\
  =
  \sum_{j=0}^{N-1}
  \widetilde T_{N-1}\cdots \widetilde T_{j+1}
  (\widetilde T_j-T_j)
  T_{j-1}\cdots T_0 .
\end{multline}
All factors outside $\widetilde T_j-T_j$ are unitary. Therefore
\begin{align}
  \norm{\widetilde U_\gamma-\widehat U_\gamma}_2
  &\le
  \sum_{j=0}^{N-1}
  \norm{\widetilde T_j-T_j}_2 \\
  &=
  \sum_{j=0}^{N-1}
  \norm{\widetilde W_j-W_j}_2.
\end{align}
The final claim follows by applying \cref{prop:polar_stability} to each overlap matrix.
\end{proof}

\subsection{Conditioning diagnostic}

The stability bounds identify a simple diagnostic for any theoretical or numerical implementation:
\begin{equation}
  \mu_{\min}
  :=
  \min_{0\le k\le N-1}
  \sigma_{\min}(M_k).
  \label{eq:min_singular_value_diagnostic}
\end{equation}
Reliable holonomy reconstruction requires $\mu_{\min}$ to remain bounded away from zero. In addition to reporting the reconstructed holonomy, one should therefore report conditioning data. A minimal gauge-compatible diagnostic set is given by $\spec(\widehat U_\gamma)$, $\Tr(\widehat U_\gamma^r)$, and $\mu_{\min}$.

\section{Feed-forward correction on the logical subspace}

The reconstructed holonomy can be used as a feed-forward correction when the effective logical operation contains a geometric contribution induced by subspace transport. The correction rule depends on whether the geometric contribution acts on the left or on the right of the intended logical gate.

This section separates two issues that are often conflated. The first is holonomy reconstruction: estimating the geometric unitary associated with the transported subspace. The second is correction convention: deciding on which side of the effective logical operation the inverse holonomy should be applied. In an Abelian problem the distinction is mostly invisible, because phases commute. In the present non-Abelian setting the distinction is essential.

\subsection{Left-acting geometric distortion}

Let $V\in U(m)$ denote the intended logical gate in the chosen base frame. Suppose that the effective logical operation is approximately decomposed as
\begin{equation}
  V_{\mathrm{eff}}
  \approx
  U_\gamma V,
  \label{eq:left_effective_decomposition}
\end{equation}
where $U_\gamma$ is the geometric holonomy associated with the transported logical subspace. Given the discrete estimate $\widehat U_\gamma$, define the corrected operation by
\begin{equation}
  V_{\mathrm{corr}}
  :=
  \widehat U_\gamma^\dagger V_{\mathrm{eff}}.
  \label{eq:left_feedforward_correction}
\end{equation}
If the decomposition \cref{eq:left_effective_decomposition} is exact and $\widehat U_\gamma=U_\gamma$, then $V_{\mathrm{corr}}=V$.

\subsection{Right-acting geometric distortion}

Some circuit or coefficient conventions may instead lead to
\begin{equation}
  V_{\mathrm{eff}}
  \approx
  VU_\gamma.
  \label{eq:right_effective_decomposition}
\end{equation}
In this case the correction is
\begin{equation}
  V_{\mathrm{corr}}
  :=
  V_{\mathrm{eff}}\widehat U_\gamma^\dagger.
  \label{eq:right_feedforward_correction}
\end{equation}
The distinction between \cref{eq:left_feedforward_correction,eq:right_feedforward_correction} is important in the non-Abelian case. When $m>1$, matrix multiplication is generally noncommutative, so left and right correction are not interchangeable.

In applications, the appropriate convention must be fixed by the circuit or coefficient ordering convention. For example, if the geometric transport is accumulated before the intended logical gate in the chosen representation, then the left-acting correction is appropriate. If instead the intended gate is followed by geometric transport, then the right-acting correction is the appropriate one. The formalism does not require one convention to be universal; it requires only that the estimated holonomy and the effective logical operation be represented in the same base frame. These correction rules apply only after a model-dependent identification of the geometric factor as left- or right-acting; the present work supplies the covariant reconstruction of that factor, not a universal hardware compilation procedure.

\subsection{Gauge covariance of the correction}

\begin{proposition}[Gauge-covariant feed-forward correction]
\label{prop:gauge_correction}
Assume the left-acting convention $V_{\mathrm{eff}}\approx U_\gamma V$. Suppose that, under a base-frame change $\Phi_0\mapsto \Phi_0G_0$ with $G_0\in U(m)$, the effective operation and reconstructed holonomy transform covariantly:
\begin{align}
  V_{\mathrm{eff}}
  &\mapsto
  G_0^\dagger V_{\mathrm{eff}}G_0, \\
  \widehat U_\gamma
  &\mapsto
  G_0^\dagger\widehat U_\gamma G_0.
\end{align}
Then the corrected operation $V_{\mathrm{corr}} = \widehat U_\gamma^\dagger V_{\mathrm{eff}}$ transforms covariantly:
\begin{equation}
  V_{\mathrm{corr}}
  \mapsto
  G_0^\dagger V_{\mathrm{corr}}G_0.
\end{equation}
\end{proposition}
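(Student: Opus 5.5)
The plan is to substitute the assumed transformation laws into the definition $V_{\mathrm{corr}}=\widehat U_\gamma^\dagger V_{\mathrm{eff}}$ and simplify with unitarity of $G_0$. The only preliminary is the transformation of the adjoint. From $\widehat U_\gamma'=G_0^\dagger\widehat U_\gamma G_0$ and $(XYZ)^\dagger=Z^\dagger Y^\dagger X^\dagger$,
\begin{equation}
  (\widehat U_\gamma')^\dagger
  =
  G_0^\dagger\widehat U_\gamma^\dagger G_0 .
\end{equation}
So the inverse holonomy transforms by the same conjugation as the holonomy itself.

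Next I will multiply the two transformed factors in the order fixed by the left-acting convention:
\begin{equation}
  V_{\mathrm{corr}}'
  =
  (\widehat U_\gamma')^\dagger V_{\mathrm{eff}}'
  =
  G_0^\dagger\widehat U_\gamma^\dagger G_0G_0^\dagger V_{\mathrm{eff}}G_0 .
\end{equation}
Using $G_0G_0^\dagger=I_m$, the interior factors cancel. This leaves $G_0^\dagger\widehat U_\gamma^\dagger V_{\mathrm{eff}}G_0=G_0^\dagger V_{\mathrm{corr}}G_0$, which is the claim.

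The same computation, with the factors in the opposite order, gives the analogous covariance of the right-acting correction $V_{\mathrm{eff}}\widehat U_\gamma^\dagger$ of \cref{eq:right_feedforward_correction}. As a corollary, spectra and traces of $V_{\mathrm{corr}}$ are gauge-invariant. Likewise, the gate-fidelity quantity $\lvert\Tr(V^\dagger V_{\mathrm{corr}})\rvert$ is gauge-invariant whenever $V$ transforms in the same way.

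No step here is technically difficult. The one point that needs care is conceptual: both transformation laws must hold with the same $G_0$, meaning $V_{\mathrm{eff}}$ and $\widehat U_\gamma$ are represented in the same base frame. For $\widehat U_\gamma$ this is supplied by \cref{prop:gauge_covariance}, together with the closed-loop condition $G_N=G_0$ and a common endpoint identification $B$. For $V_{\mathrm{eff}}$ it is a hypothesis of the proposition. I will state this explicitly and will not try to derive it from the hardware model.
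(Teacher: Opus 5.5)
Your proposal is correct and follows essentially the same route as the paper's proof. Both substitute the assumed conjugation laws into $V_{\mathrm{corr}}=\widehat U_\gamma^\dagger V_{\mathrm{eff}}$, take the adjoint of $G_0^\dagger\widehat U_\gamma G_0$, and cancel $G_0G_0^\dagger=I_m$ to obtain $G_0^\dagger V_{\mathrm{corr}}G_0$.
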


\begin{proof}
Using the assumed transformation laws,
\begin{align}
  V_{\mathrm{corr}}'
  &=
  (\widehat U_\gamma')^\dagger V_{\mathrm{eff}}' \\
  &=
  (G_0^\dagger\widehat U_\gamma G_0)^\dagger
  (G_0^\dagger V_{\mathrm{eff}}G_0) \\
  &=
  G_0^\dagger
  \widehat U_\gamma^\dagger
  V_{\mathrm{eff}}
  G_0 \\
  &=
  G_0^\dagger V_{\mathrm{corr}}G_0.
\end{align}
\end{proof}

\subsection{Correction error}

In the left-acting convention, suppose
\begin{equation}
  V_{\mathrm{eff}}=U_\gamma V+R,
  \label{eq:effective_with_residual}
\end{equation}
where $R$ represents residual non-geometric implementation error. Then
\begin{align}
  V_{\mathrm{corr}}-V
  &=
  \widehat U_\gamma^\dagger V_{\mathrm{eff}}-V \\
  &=
  \left(\widehat U_\gamma^\dagger U_\gamma-I_m\right)V
  +
  \widehat U_\gamma^\dagger R .
\end{align}
Taking the operator norm and using unitarity gives
\begin{equation}
  \norm{V_{\mathrm{corr}}-V}_2
  \le
  \norm{\widehat U_\gamma-U_\gamma}_2
  +
  \norm{R}_2.
  \label{eq:correction_error_bound}
\end{equation}

A convenient gauge-invariant diagnostic for the corrected logical operation is the normalised unitary overlap
\begin{equation}
  F(V_{\mathrm{corr}},V)
  :=
  \frac{1}{m^2}
  \left|
    \Tr(V_{\mathrm{corr}}^\dagger V)
  \right|^2.
  \label{eq:gate_fidelity_diagnostic}
\end{equation}

\section{Protocol-level algorithm}

The algorithm takes as input either a sequence of frames $\{\Phi_k\}_{k=0}^N$ or a sequence of adjacent overlap matrices $\{M_k\}_{k=0}^{N-1}$. It outputs a reconstructed holonomy $\widehat U_\gamma$ and, when an effective logical operation $V_{\mathrm{eff}}$ is supplied, a corrected operation $V_{\mathrm{corr}}$.

\begin{pseudofloat}
\caption{Gauge-covariant holonomy reconstruction and feed-forward correction.}
\label{fig:pseudocode}
\begin{algorithmic}[1]
  \Require Frames $\{\Phi_k\}_{k=0}^{N}$ or overlaps $\{M_k\}_{k=0}^{N-1}$; optional effective gate $V_{\mathrm{eff}}$
  \Ensure Holonomy estimate $\widehat U_\gamma$; optional corrected gate $V_{\mathrm{corr}}$
  \For{$k=0$ to $N-1$}
    \If{frames are supplied}
      \State $M_k \gets \Phi_k^\dagger\Phi_{k+1}$
    \EndIf
    \State $W_k \gets M_k(M_k^\dagger M_k)^{-1/2}$ \Comment{polar backward comparator}
    \State $T_k \gets W_k^\dagger$ \Comment{forward coefficient transport}
  \EndFor
  \State $\widehat U_\gamma \gets T_{N-1}T_{N-2}\cdots T_1T_0$
  \If{$V_{\mathrm{eff}}$ is supplied}
    \State $V_{\mathrm{corr}} \gets \widehat U_\gamma^\dagger V_{\mathrm{eff}}$ \Comment{left-acting correction}
  \EndIf
  \State \Return $\widehat U_\gamma$ and, if computed, $V_{\mathrm{corr}}$
\end{algorithmic}
\end{pseudofloat}

For a right-acting geometric distortion, the final correction line should instead be replaced by
\begin{equation}
  V_{\mathrm{corr}}
  \gets
  V_{\mathrm{eff}}\widehat U_\gamma^\dagger.
\end{equation}

The dominant operation at each step is the polar decomposition of an $m\times m$ matrix. Using a singular value decomposition or eigendecomposition of $M_k^\dagger M_k$, this costs $O(m^3)$ per step, so the full reconstruction cost scales as $O(Nm^3)$.

\section{Reproducible numerical validation}

The framework above is theoretical, but it is supported by reproducible numerical validation. The repository accompanying this manuscript contains six Mathematica/Wolfram Language notebooks: gauge covariance, non-Abelian convergence, frame-based overlap reconstruction, feed-forward correction, conditioning/noise sensitivity, and an aggregate validation summary. The generated data are synthetic validation outputs, not experimental measurements.

The validation studies are designed to support the formal claims of the paper rather than to reproduce a specific device. In an experimental photonic setting, one could obtain the required overlap or transfer information from device characterisation or tomography methods such as linear-optical process reconstruction~\cite{LaingOBrien2012}. The present manuscript instead uses analytic and synthetic data so that the gauge covariance, convergence, and conditioning behaviour can be isolated from device-specific noise sources.

The relevance of such calibration tools is strengthened by recent progress in high-dimensional and robust time-bin photonic protocols, including single-photon time-bin qudit logic proposals~\cite{Yu2025NatCommun,White2025PRL,Singh2025TimeBins,Delteil2024QuditLogic}.

The purpose of these numerical studies is not to model a particular photonic chip. Instead, the tests isolate the mathematical claims of the paper. Each benchmark is designed to check one structural property: covariance under frame changes, convergence to a known continuum holonomy, correct reconstruction from overlap data, feed-forward removal of the reconstructed holonomy, or sensitivity to overlap conditioning. This makes the numerical results reproducible without requiring device-specific parameters.

The validation suite also distinguishes between two kinds of evidence. Connection-based tests check the non-Abelian ordered-product convention directly. Frame-based tests check the full pipeline from sampled frames to overlaps, polar comparators, and holonomy reconstruction. Noise and conditioning tests then probe the regime in which the assumptions of the stability theorem begin to fail.

\subsection{Validation overview}

The numerical suite tests the closed-loop gauge covariance of the estimator; the convergence of a discrete ordered product to a continuum non-Abelian holonomy; the convergence of the full frame-to-overlap-to-polar-holonomy pipeline against an analytic frame-loop reference; feed-forward correction for left- and right-acting geometric distortions; and stability scaling with the dimensionless perturbation ratio $\eta/\mu_{\min}$. The aggregate validation status was \texttt{PASS}: all checks for gauge covariance, unitarity, convergence, feed-forward correction, and scaled-noise behaviour returned true in the validation summary.

\begin{table*}[t]
\centering
\caption{Structure of the validation suite. The tests are synthetic checks of the theoretical pipeline rather than measurements from a photonic device.}
\label{tab:validation_structure}
\begin{tabular}{lll}
\toprule
Validation module & Quantity checked & Expected behaviour \\
\midrule
Gauge covariance & $\norm{\widehat U'_\gamma-G_0^\dagger \widehat U_\gamma G_0}$ & Numerical-zero residual \\
Abelian reduction & Scalar phase convention & Agreement with the Abelian phase limit \\
Non-Abelian convergence & $\norm{\widehat U_\gamma-U_{\mathrm{ref}}}$ & Decrease under refinement \\
Frame-overlap pipeline & Error from sampled frames alone & Agreement with analytic frame-loop reference \\
Feed-forward correction & $F(V_{\mathrm{corr}},V)$ & Near-unit fidelity when assumptions hold \\
Conditioning/noise & Error versus $\eta/\mu_{\min}$ & Growth as conditioning worsens \\
\bottomrule
\end{tabular}
\end{table*}

\subsection{Gauge covariance}

The gauge covariance test samples a synthetic frame loop and applies random closed-loop gauge transformations $G_N=G_0$. The residual
\begin{equation}
  \epsilon_{\mathrm{gauge}}
  =
  \norm{
    \widehat U_\gamma'
    -
    G_0^\dagger\widehat U_\gamma G_0
  }_F
\end{equation}
was at numerical precision. Across the tested refinements, the maximum gauge covariance residual was
\begin{equation}
  9.54\times 10^{-15},
\end{equation}
and the maximum gauge-run unitarity residual was
\begin{equation}
  1.45\times 10^{-14}.
\end{equation}

These residuals are at the level expected from floating-point arithmetic. The test confirms that changing the frame sequence alters the holonomy matrix only by the predicted base-frame conjugation. Thus the matrix entries themselves are not invariant, but the spectrum and Wilson-loop traces are stable under the gauge transformation.

\subsection{Non-Abelian connection benchmark}

The first convergence benchmark uses the anti-Hermitian connection
\begin{multline}
  A(t)
  =
  \ii\bigl[
    0.7\cos(t)\sigma_x
    \\
    +
    0.4\sin(2t)\sigma_y
    +
    0.2\sigma_z
  \bigr],
  \label{eq:numerical_nonabelian_connection}
\end{multline}
for $t\in[0,2\pi]$. For generic $t,t'$, $[A(t),A(t')]\ne0$, so path ordering is essential. The continuum reference is obtained from
\begin{equation}
  \dot U(t)=-A(t)U(t),
  \qquad
  U(0)=I_2.
\end{equation}
The reference eigenphases were $\{-0.70134,\;0.70134\}$. The Wilson traces for $r=1,2,3$ were approximately
\begin{equation}
  \{1.52795,\;0.33464,\;-1.01663\}.
\end{equation}

The Frobenius error decreased from $1.57\times 10^{-2}$ at the coarsest tested partition to $9.01\times 10^{-7}$ at the finest partition. A log--log fit gave an estimated convergence order
\begin{equation}
  2.00849.
\end{equation}

The observed order is higher than the conservative first-order consistency guarantee because this benchmark uses smooth midpoint exponentials for the connection-based ordered product. The purpose of the test is therefore not to replace the general theorem, but to verify that the implemented ordering and sign convention reproduce the expected non-Abelian continuum transport with rapid convergence in a smooth benchmark.

\subsection{Frame-based overlap pipeline}

The second convergence benchmark validates the full frame-to-overlap-to-polar-holonomy pipeline using an analytic rank-2 tangent-frame loop on a sphere. The fixed polar angle was $\theta_0=0.7$, with $\cos\theta_0=0.76484$. The exact connection is
\begin{equation}
  A_\phi
  =
  \begin{pmatrix}
  0 & -\cos\theta_0 \\
  \cos\theta_0 & 0
  \end{pmatrix},
\end{equation}
and the exact reference holonomy is $U_{\mathrm{ref}}=\exp(-2\pi A_\phi)$. The exact reference eigenphases were approximately $\{-1.47754,\;1.47754\}$. The reference unitarity error was $5.89\times10^{-17}$.

The frame-pipeline Frobenius error decreased from $9.32\times 10^{-2}$ to $5.66\times 10^{-6}$. A log--log fit gave an estimated convergence order
\begin{equation}
  2.00065.
\end{equation}
The minimum overlap singular value over the run was $\mu_{\min}=0.92074$, so the test remained well-conditioned.

This benchmark is closer to the intended overlap-based use case than the direct connection benchmark. The algorithm is not given the connection $A_\phi$ directly. It receives sampled frames, constructs the overlaps $M_k=\Phi_k^\dagger\Phi_{k+1}$, extracts the polar comparators, and forms the ordered product. Agreement with the analytic tangent-frame reference therefore validates the complete discrete reconstruction pipeline.

The minimum singular value reported in this test also confirms that convergence is not being obtained from an ill-conditioned sequence of overlaps. The adjacent subspaces remain sufficiently close throughout the refinement, so the polar comparators are stable representatives of local transport.

\subsection{Feed-forward correction}

The feed-forward validation uses the non-Abelian connection in \cref{eq:numerical_nonabelian_connection}. The ODE reference holonomy was projected to the nearest unitary by polar decomposition before constructing synthetic effective gates. The unitarity error before this projection was $5.35\times10^{-8}$, while the projected unitary reference had unitarity error $3.93\times10^{-17}$.

For both left- and right-acting conventions, the correction error converged with the same order as the holonomy estimate:
\begin{align}
  \text{holonomy order} & = 2.00855, \\
  \text{left correction order} & = 2.00855, \\
  \text{right correction order} & = 2.00855.
\end{align}
At the finest tested refinement, both left and right corrections achieved infidelity of order $1-F \simeq 3.9\times 10^{-13}$.

The correction test shows that the feed-forward error tracks the holonomy estimation error. This is precisely what the correction bound predicts: once non-geometric residual errors are absent from the synthetic model, the remaining gate error is controlled by $\|\widehat U_\gamma-U_\gamma\|$. The left- and right-acting conventions give the same convergence order in this benchmark because both are corrected using the same holonomy estimate, but they correspond to different operator orderings and should not be interchanged in a general circuit model.

\subsection{Conditioning and noise sensitivity}

The noise benchmark perturbs overlap matrices by random complex matrices with controlled norm. For the baseline frame loop with $N=80$, the minimum singular value was $\mu_{\min}=0.99872$, and the holonomy unitarity residual was $8.02\times10^{-15}$.

To test the stability scaling, each clean overlap was replaced by
\begin{equation}
  \polar(M_k)\diag(1,\mu),
\end{equation}
so that the polar factor remained unchanged while the smallest singular value was controlled. For scaled noise $\eta=\rho\mu_{\min}$, the fitted slopes of mean holonomy error versus $\rho=\eta/\mu_{\min}$ were
\begin{equation}
  \{0.995,\;1.017,\;0.947,\;1.066,\;0.983\},
\end{equation}
with mean $1.00159$. This confirms approximately linear scaling with $\eta/\mu_{\min}$ in the perturbative regime, consistent with \cref{prop:polar_stability}. In a fixed-$\eta$ test with $\eta=10^{-6}$, the fitted slope of error versus inverse conditioning was $0.36445$, showing increased sensitivity as $\mu_{\min}$ decreases.

The important feature is not the precise fitted exponent in every finite-trial noise model, but the dependence on the dimensionless perturbation ratio $\eta/\mu_{\min}$. This is the ratio predicted by the polar stability bound. The scaled-noise test confirms that, in the perturbative regime, the mean holonomy error is approximately linear in this quantity. The fixed-$\eta$ test then illustrates the complementary effect: for the same absolute perturbation, worse conditioning produces larger reconstruction error.

\begin{table*}[t]
\centering
\caption{Summary of numerical validation metrics. All data are synthetic numerical validation outputs, not experimental measurements. Numerical precision is rounded for readability.}
\label{tab:validation_summary}
\begin{tabular}{lll}
\toprule
Validation & Metric & Value \\
\midrule
Gauge covariance & Maximum covariance residual & $9.5\times10^{-15}$ \\
Gauge covariance & Maximum unitarity residual & $1.4\times10^{-14}$ \\
Non-Abelian convergence & Estimated order & $2.008$ \\
Non-Abelian convergence & Error range & $1.57\times10^{-2}\to9.01\times10^{-7}$ \\
Frame-overlap pipeline & Estimated order & $2.001$ \\
Frame-overlap pipeline & Error range & $9.32\times10^{-2}\to5.66\times10^{-6}$ \\
Frame-overlap pipeline & Minimum overlap singular value & $0.921$ \\
Feed-forward correction & Holonomy convergence order & $2.009$ \\
Feed-forward correction & Left correction order & $2.009$ \\
Feed-forward correction & Right correction order & $2.009$ \\
Feed-forward correction & Final infidelity & $\sim 3.9\times10^{-13}$ \\
Noise sensitivity & Mean scaled-noise slope & $1.002$ \\
Noise sensitivity & Fixed-$\eta$ conditioning slope & $0.364$ \\
\bottomrule
\end{tabular}
\end{table*}

\begin{figure}[t]
  \centering
  \textbf{Summary of convergence tests}\par\vspace{0.6ex}
  \includegraphics[width=\columnwidth]{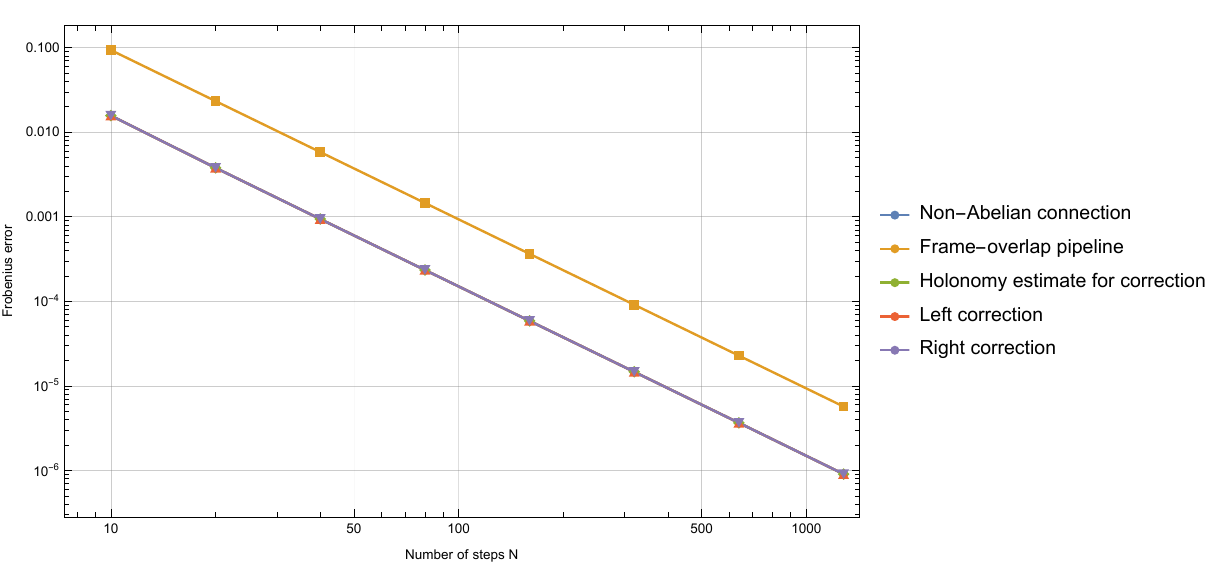}
  \caption{Aggregate convergence validation. The plot shows Frobenius errors for the non-Abelian connection benchmark, frame-overlap pipeline, holonomy estimation for correction, and left/right feed-forward correction.}
  \label{fig:summary_convergence}
\end{figure}

\begin{figure}[t]
  \centering
  \textbf{Summary of unitarity residuals}\par\vspace{0.6ex}
  \includegraphics[width=\columnwidth]{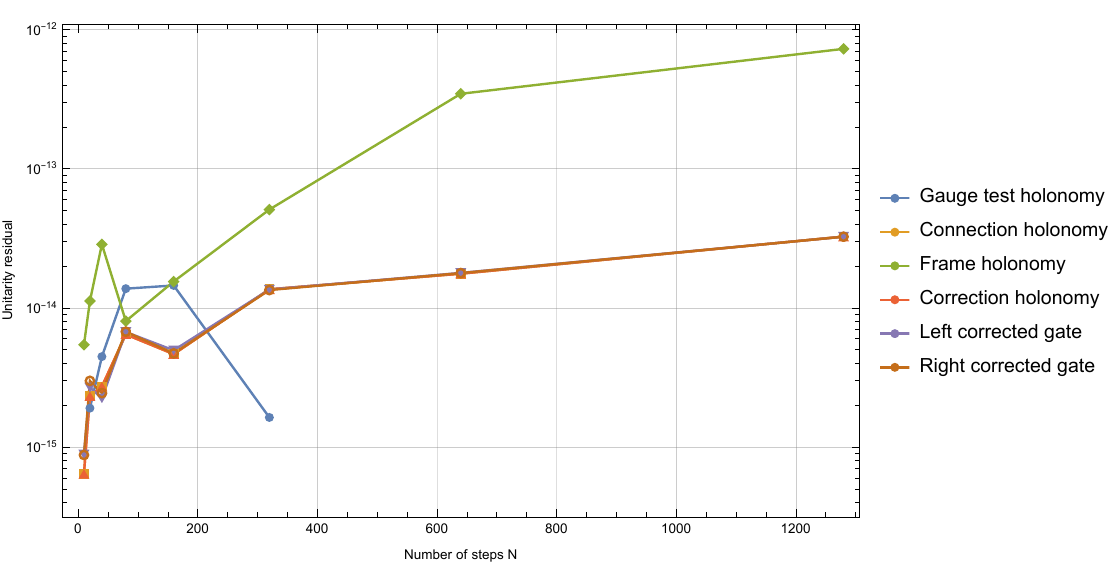}
  \caption{Aggregate unitarity validation. The reconstructed holonomies and corrected gates remain unitary to numerical precision across the tested refinements.}
  \label{fig:summary_unitarity}
\end{figure}

\begin{figure}[t]
  \centering
  \textbf{Feed-forward correction infidelity}\par\vspace{0.6ex}
  \includegraphics[width=\columnwidth]{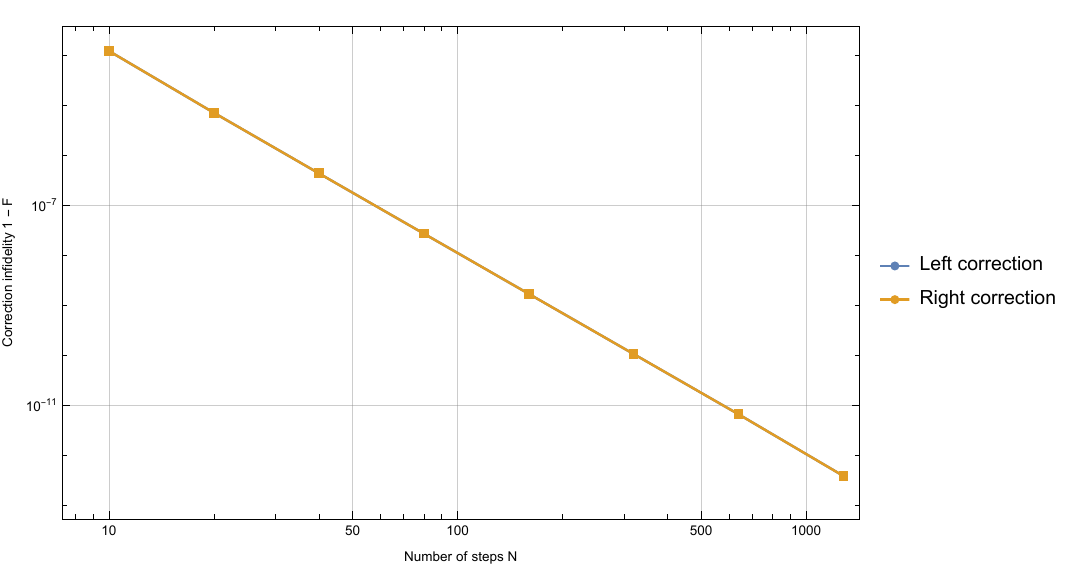}
  \caption{Feed-forward correction infidelity $1-F$. Plotting infidelity rather than fidelity resolves convergence near unit fidelity.}
  \label{fig:summary_correction_infidelity}
\end{figure}

\begin{figure}[t]
  \centering
  \textbf{Scaled-noise slopes across conditioning levels}\par\vspace{0.6ex}
  \includegraphics[width=\columnwidth]{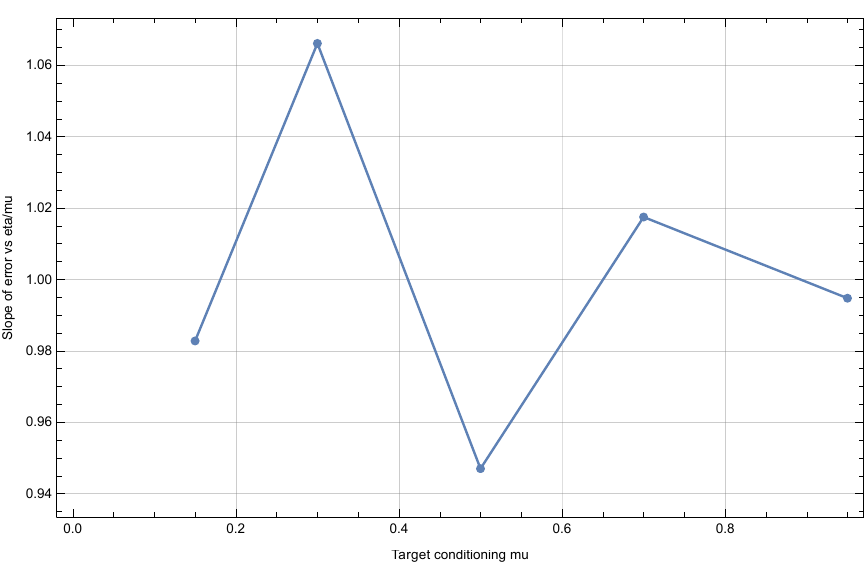}
  \caption{Scaled-noise slopes across controlled conditioning levels. The slopes cluster near one, consistent with approximately linear dependence on $\eta/\mu_{\min}$ in the perturbative regime.}
  \label{fig:summary_noise}
\end{figure}

\begin{figure}[t]
  \centering
  \textbf{Projector-step refinement diagnostic}\par\vspace{0.6ex}
  \includegraphics[width=\columnwidth]{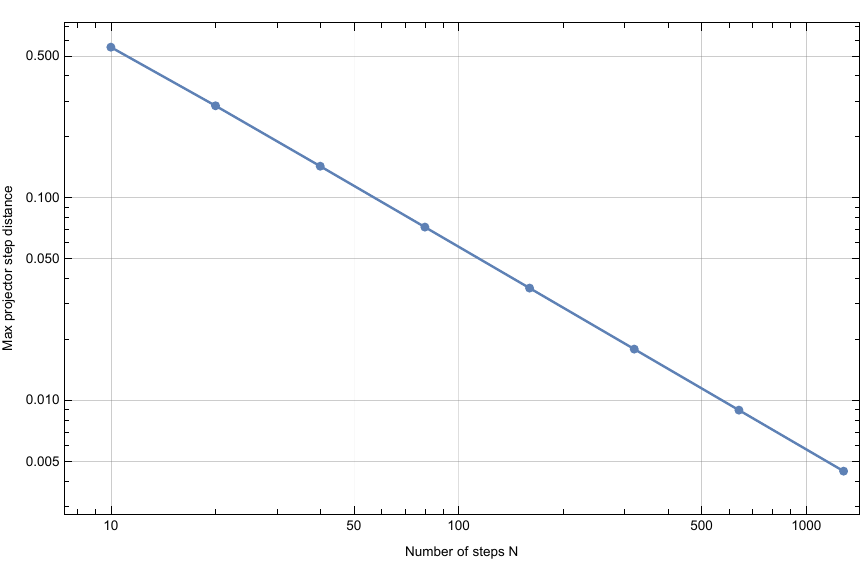}
  \caption{Projector-step refinement diagnostic for the frame-overlap pipeline. The maximum adjacent-projector distance decreases under mesh refinement, consistent with the requirement that nearby overlap matrices remain well-conditioned.}
  \label{fig:summary_projector}
\end{figure}

\section{Discussion}

The central claim of this paper is modest but structurally important: once the relevant transported object is a logical subspace rather than a collection of independent rays, Abelian phase calibration must be replaced by a gauge-covariant matrix-valued reconstruction problem. In the raywise setting, geometric distortion may be represented by scalar Pancharatnam--Berry phases and corrected by subtracting a phase profile. In the subspace setting, the corresponding geometric object is a Wilczek--Zee holonomy: a unitary matrix defined only after a choice of logical frame and transforming by conjugation under changes of that frame. In this precise sense, the present work is the non-Abelian theoretical counterpart of the Abelian time-bin calibration and feed-forward framework of Ref.~\cite{WeeBruzzese2026PB}.

The distinction between Abelian and non-Abelian geometric effects is not merely formal. In the Abelian case, geometric information is captured by a phase associated with a ray~\cite{Pancharatnam1956,Berry1984,AharonovAnandan1987,SamuelBhandari1988}. In the non-Abelian case, the transported object is an entire subspace, and the geometric information is captured by a unitary matrix defined up to conjugation~\cite{WilczekZee1984,Anandan1988}. This is the same structural feature that underlies holonomic quantum computation proposals, where logical operations are implemented or interpreted through geometric transport of subspaces~\cite{Sjoqvist2012,Xu2012PRL}. The present paper does not claim to implement holonomic computation or an experimental calibration routine. Rather, it identifies the form that a covariant calibration object must take when a time-bin logical sector is modelled as a transported subspace.

This gives the framework its main conceptual arc. The hidden simplifying assumption in an Abelian feed-forward model is that the relevant geometric distortion can be represented by commuting scalar data. Once that assumption is relaxed, three changes occur simultaneously. First, the estimated object becomes a unitary matrix rather than a phase. Second, the matrix depends on an arbitrary choice of logical frame and therefore must be handled covariantly rather than entrywise. Third, feed-forward correction becomes side-dependent: applying the inverse holonomy on the left is not equivalent to applying it on the right unless the relevant matrices commute. Thus ``subtract the phase'' is replaced by ``reconstruct the holonomy and invert it on the correct side.''

The overlap matrices $M_k=\Phi_k^\dagger\Phi_{k+1}$ contain precisely the frame-comparison data needed to reconstruct a discrete version of Wilczek--Zee transport. The polar decomposition extracts a unitary comparator from each overlap, ensuring that the reconstructed evolution remains on $U(m)$ even though the raw overlaps need not be unitary. Moreover, by polar optimality, this comparator is the nearest unitary matrix to the raw overlap in Frobenius norm. The method therefore gives a canonical local unitarisation of frame-comparison data.

Another way to view the construction is as a discrete parallel-transport rule. The raw overlap matrix answers the question: how does the next frame project onto the current one? The polar factor extracts from that answer the unitary part of the comparison, while the singular values record whether the comparison was trustworthy. This separation is useful because it keeps the reconstructed transport on the unitary group without hiding the conditioning information.

The gauge structure is not a technical complication but a necessary part of the formulation. Since the choice of frame inside the logical subspace is arbitrary, the reconstructed closed-loop holonomy can only be expected to transform by conjugation. Consequently, physical reporting should use conjugacy invariants, such as eigenphases or Wilson-loop traces.

This point is particularly important for reporting numerical or experimental results. A different choice of logical basis can change every entry of the holonomy matrix, even though the underlying closed-loop transport is the same. For this reason, tables of raw matrix entries are less meaningful than eigenphases, traces of powers, and conditioning diagnostics. The validation suite follows this principle by reporting gauge-covariant residuals and gauge-invariant spectral data.

The feed-forward correction principle generalises Abelian geometric phase cancellation. In the scalar case, one estimates a phase and subtracts it. In the non-Abelian case, one estimates a unitary matrix and applies its inverse on the appropriate side of the effective logical operation. The side matters:
\begin{equation}
  V_{\mathrm{eff}}\approx U_\gamma V
  \quad\Rightarrow\quad
  V_{\mathrm{corr}}=\widehat U_\gamma^\dagger V_{\mathrm{eff}},
\end{equation}
whereas
\begin{equation}
  V_{\mathrm{eff}}\approx VU_\gamma
  \quad\Rightarrow\quad
  V_{\mathrm{corr}}=V_{\mathrm{eff}}\widehat U_\gamma^\dagger.
\end{equation}
These two correction rules coincide only in special commuting cases. In any device-specific application, the correct side must be fixed by the physical circuit convention, coefficient convention, or effective model used to identify the geometric factor.

The theory also identifies a practical limitation. The method is stable only when adjacent overlap matrices are well-conditioned. Thus a refined discretisation is not merely a numerical convenience; it is part of the validity condition for reliable holonomy reconstruction. The smallest singular value diagnostic $\mu_{\min}$ should therefore be reported in any theoretical or numerical validation.

The numerical validation clarifies the role of the theory. The tests do not constitute an experimental demonstration; they are reproducible synthetic benchmarks of the mathematical framework. Their value is that they verify the algorithmic properties that any later device-specific implementation would need: gauge covariance, correct path ordering, stable overlap reconstruction, and successful feed-forward correction when the geometric contribution is known or estimated.

The present work is intentionally platform-independent. It does not assume a specific integrated photonic circuit, loss model, detector model, or experimental calibration pipeline. Device-specific implementations can be built on top of this framework by supplying a physical model for the frames or overlaps. In that sense, the framework is best read not as a completed experimental protocol, but as a theoretical account of what calibration must track once the relevant time-bin logical object is treated as a transported subspace rather than as a collection of independent rays.

\section{Code and data availability}

The accompanying repository contains Mathematica/Wolfram Language notebooks and a headless validation pipeline that reproduce the numerical validation tables, figures, and logs used in this manuscript. The notebooks are organised as independent validation modules: gauge covariance, non-Abelian convergence, frame-overlap reconstruction, feed-forward correction, conditioning/noise sensitivity, and aggregate summary. The repository also includes shared source routines for polar holonomy construction, overlap processing, unitary diagnostics, random gauge transformations, and export of reproducibility artefacts. The generated data are synthetic numerical validation outputs rather than experimental measurements.

The reproducibility package is archived on Zenodo with DOI~\href{\zenodourl}{\zenododoi}; the source is on \href{\repositoryurl}{GitHub}.

\section{Conclusion}

We developed a gauge-covariant framework for reconstructing non-Abelian holonomies from discrete overlap matrices between transported logical subspace frames. The polar factor of each overlap matrix defines a canonical backward comparator, whose adjoint gives the forward coefficient transport. The resulting ordered product is gauge-covariant, converges to the Wilczek--Zee holonomy under partition refinement, and is stable when overlap matrices remain well-conditioned.

The framework generalises Abelian time-bin geometric-phase correction to the non-Abelian setting. Scalar phase subtraction is replaced by matrix-valued feed-forward correction, with distinct left- and right-acting forms depending on the circuit convention. Reproducible numerical validation confirms gauge covariance, convergence, feed-forward correction fidelity, and the predicted dependence on overlap conditioning.

\appendix
\section{Proof details}

This appendix collects several technical details used in the main text. The purpose is to make explicit the sign convention, the polar-decomposition covariance, and the relationship between frame comparison and coefficient transport.

\subsection{Equivariance of the polar decomposition}

Let
\begin{equation}
  M=WH
\end{equation}
be the polar decomposition of a nonsingular matrix $M\in\mathbb C^{m\times m}$, with $W\in U(m)$ and $H=(M^\dagger M)^{1/2}>0$. Let $L,R\in U(m)$. Then
\begin{equation}
  LMR
  =
  (LWR)(R^\dagger H R).
\end{equation}
The first factor $LWR$ is unitary, and the second factor $R^\dagger H R$ is positive Hermitian. By uniqueness of the polar decomposition for nonsingular matrices,
\begin{equation}
  \polar(LMR)=L\polar(M)R.
  \label{eq:appendix_polar_equivariance}
\end{equation}
This identity is the key algebraic input in the gauge-covariance proof.

\subsection{First-order expansion of the overlap matrix}

Let $\Phi(t)$ be a $C^2$ orthonormal frame and let
\begin{equation}
  A_t(t)=\Phi(t)^\dagger\dot\Phi(t).
\end{equation}
For a small step $\Delta t$, Taylor expansion gives
\begin{equation}
  \Phi(t+\Delta t)
  =
  \Phi(t)+\dot\Phi(t)\Delta t+O(\Delta t^2).
\end{equation}
Hence the adjacent-frame overlap is
\begin{align}
  M(t,\Delta t)
  &:=
  \Phi(t)^\dagger\Phi(t+\Delta t) \\
  &=
  I_m+\Phi(t)^\dagger\dot\Phi(t)\Delta t+O(\Delta t^2) \\
  &=
  I_m+A_t(t)\Delta t+O(\Delta t^2).
  \label{eq:appendix_overlap_expansion}
\end{align}
Since $\Phi(t)^\dagger\Phi(t)=I_m$, differentiation gives $A_t(t)^\dagger=-A_t(t)$.

\subsection{First-order expansion of the polar comparator}

Using \cref{eq:appendix_overlap_expansion}, write
\begin{equation}
  M=I_m+A\Delta t+O(\Delta t^2),
\end{equation}
with $A^\dagger=-A$. Then
\begin{equation}
  M^\dagger M
  =
  I_m+O(\Delta t^2).
\end{equation}
Therefore
\begin{equation}
  (M^\dagger M)^{-1/2}
  =
  I_m+O(\Delta t^2),
\end{equation}
and the polar unitary satisfies
\begin{align}
  \polar(M)
  &=
  M(M^\dagger M)^{-1/2} \\
  &=
  I_m+A\Delta t+O(\Delta t^2).
  \label{eq:appendix_polar_expansion}
\end{align}
Thus the polar factor of $M_k=\Phi_k^\dagger\Phi_{k+1}$ has the same first-order sign as the connection:
\begin{equation}
  W_k
  =
  I_m+A_k\Delta t_k+O(\Delta t_k^2).
\end{equation}

\subsection{Discrete-to-continuum sign convention}

Let
\begin{equation}
  |\psi(t)\rangle=\Phi(t)c(t)
\end{equation}
be a state vector lying in the transported logical subspace. Projector parallel transport is characterised by
\begin{equation}
  P(t)|\psi(t)\rangle=|\psi(t)\rangle,
  \qquad
  P(t)|\dot\psi(t)\rangle=0,
  \label{eq:appendix_projector_parallel_condition}
\end{equation}
where $P(t)=\Phi(t)\Phi(t)^\dagger$. Using
\begin{equation}
  |\dot\psi(t)\rangle=\dot\Phi(t)c(t)+\Phi(t)\dot c(t),
\end{equation}
we obtain
\begin{align}
  0
  &=
  P|\dot\psi\rangle \\
  &=
  \Phi\Phi^\dagger(\dot\Phi c+\Phi\dot c) \\
  &=
  \Phi(A_t c+\dot c).
\end{align}
Since $\Phi$ has full column rank,
\begin{equation}
  \dot c(t)=-A_t(t)c(t).
  \label{eq:appendix_coefficient_transport}
\end{equation}
Thus the forward coefficient transport over a small step is
\begin{equation}
  c(t+\Delta t)
  =
  \left[I_m-A_t(t)\Delta t+O(\Delta t^2)\right]c(t).
\end{equation}
On the other hand, the polar unitary extracted from the overlap matrix satisfies
\begin{equation}
  W_k
  =
  I_m+A_k\Delta t_k+O(\Delta t_k^2).
\end{equation}
Therefore $W_k$ is naturally a backward frame comparator. The forward coefficient transport is its adjoint:
\begin{equation}
  T_k=W_k^\dagger
  =
  I_m-A_k\Delta t_k+O(\Delta t_k^2).
  \label{eq:appendix_forward_step_expansion}
\end{equation}

\subsection{Product ordering}

The convention in this paper uses column coefficient vectors. If $c_{k+1}=T_kc_k$, then iterating gives
\begin{equation}
  c_N
  =
  T_{N-1}T_{N-2}\cdots T_1T_0c_0.
\end{equation}
Thus the discrete holonomy acting on column vectors is
\begin{equation}
  \widehat U_\gamma
  =
  T_{N-1}T_{N-2}\cdots T_1T_0.
\end{equation}

\subsection{Telescoping identity for product perturbations}

Let
\begin{align}
  U &= T_{N-1}\cdots T_0, \\
  \widetilde U &= \widetilde T_{N-1}\cdots\widetilde T_0.
\end{align}
Then
\begin{multline}
  \widetilde U-U
  \\
  =
  \sum_{j=0}^{N-1}
  \widetilde T_{N-1}\cdots\widetilde T_{j+1}
  (\widetilde T_j-T_j)
  T_{j-1}\cdots T_0.
  \label{eq:appendix_telescoping_identity}
\end{multline}
If all $T_k$ and $\widetilde T_k$ are unitary, then taking the operator norm gives
\begin{equation}
  \norm{\widetilde U-U}_2
  \le
  \sum_{j=0}^{N-1}
  \norm{\widetilde T_j-T_j}_2.
\end{equation}
Since $T_j=W_j^\dagger$ and $\widetilde T_j=\widetilde W_j^\dagger$, one also has
\begin{equation}
  \norm{\widetilde T_j-T_j}_2
  =
  \norm{\widetilde W_j-W_j}_2.
\end{equation}

\bibliographystyle{unsrt}
\bibliography{references}

\end{document}